\renewcommand*{\@fnsymbol}[1]{\ifcase#1\or*\else\@arabic{\numexpr#1-1\relax}\fi}
\theoremstyle{plain}
\newtheorem{theorem}{Theorem} 
\newtheorem{lemma}{Lemma}
\theoremstyle{definition}  
\newcommand{\WLOG}{w.l.o.g.\xspace}
\newcommand{\prob}[1]{\textsc{#1}}
\newcommand{\crown}{\prob{Max}-\prob{Crown}\xspace}
\newcommand{\gap}{\prob{Gap}\xspace}
\newcommand{\eps}{\ensuremath{\varepsilon}}
\DeclareMathOperator{\opt}{OPT}
\DeclareMathOperator{\alg}{ALG}
\DeclareMathOperator{\GAP}{GAP}
\begin{document}

\title{\bf Improved Approximation Algorithms \\
  for Box Contact Representations\thanks{A preliminary version of
	this paper has appeared in \emph{Proc. 22nd Eur. Symp. Algorithms
	(ESA'14)}, volume 8737 of \emph{Lect. Notes Comput. Sci.},
	pages 87--99, Springer-Verlag.
	Ph.~Kindermann and A.~Wolff acknowledge support by the ESF EuroGIGA
	project GraDR. S.~Kobourov and S.~Pupyrev are supported by NSF grants
	CCF-1115971 and DEB 1053573}}

\author{Michael A. Bekos\thanks{Wilhelm-Schickard-Institut f\"{u}r
    Informatik, Universit\"{a}t T\"{u}bingen, Germany.}
  \and
  Thomas C. van Dijk\thanks{Lehrstuhl f\"ur Informatik I,
    Universit\"at W\"urzburg, Germany.
    }
  \and
  Martin Fink\footnotemark[3]$~^,$\thanks{Department of Computer Science,
		University of California, Santa Barbara USA.}
  \and
  Philipp~Kindermann\footnotemark[3]
  \and
  Stephen~Kobourov\thanks{Department of Computer Science,
    University of Arizona, USA.}
  \and 
  Sergey~Pupyrev\footnotemark[5]$~^,$\thanks{Institute of Mathematics
	and Computer Science, Ural Federal University, Russia.}
  \and 
  Joachim~Spoerhase\footnotemark[3]
  \and 
  Alexander~Wolff\footnotemark[3]
}

\maketitle

\begin{abstract}
  We study the following geometric representation problem: Given a
  graph whose vertices correspond to axis-aligned rectangles with
  fixed dimensions, arrange the rectangles without overlaps in the
  plane such that two rectangles touch if the graph
  contains an edge between them.  This problem is called
  \prob{Contact Representation of Word Networks} (\prob{Crown}) since
  it formalizes the geometric problem
  behind drawing word clouds in which semantically related words are
  close to each other.  \prob{Crown} is known to be
  NP-hard, and there are approximation algorithms for certain graph
  classes for the optimization version, \crown, in which realizing
  each desired adjacency yields a certain profit.

  We present the first $O(1)$-approximation algorithm for the general
  case, when the input is a complete weighted graph, and for the
  bipartite case.  Since the
  subgraph of realized adjacencies is necessarily planar, we also consider
  several planar graph classes (namely stars, trees, outerplanar, and
  planar graphs), improving upon the known results.
  For some graph classes, we also describe improvements
  in the unweighted case, where each adjacency yields the same
  profit. Finally, we show that the problem is APX-complete on
  bipartite graphs of bounded maximum degree.
\end{abstract}

\section{Introduction}

\begin{figure}[!t]
    \center
    \includegraphics[height=5.3cm]{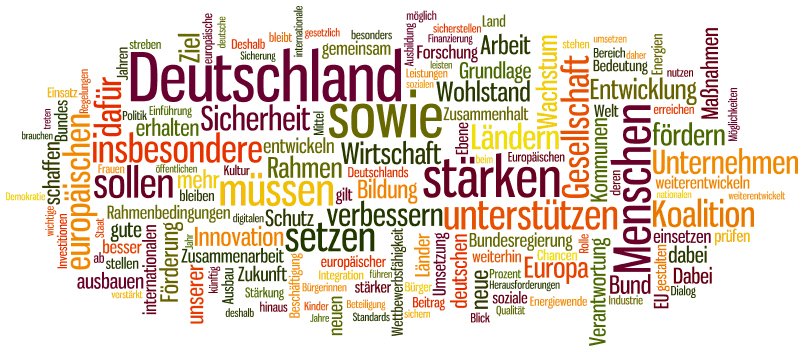}
   \caption{Der Koalitionsvertrag im Schnellcheck (Quick
     overview of the [new German] coalition agreement), Spiegel
     Online, Nov. 27, 2013,~\cite{spiegel2013}
		 (Click  on ``Fotos''.)}
\label{fig:spiegel}
\end{figure}

In the last few years, word clouds have become a standard tool for
abstracting, visualizing, and comparing text documents.  For example,
word clouds were used in 2008 to contrast the speeches of the US
presidential candidates Obama and McCain.  More recently, the German
media used them to visualize the newly signed coalition agreement and
to compare it to a similar agreement from 2009; see Fig.~\ref{fig:spiegel}.
A word cloud of a
given document consists of the most important (or most frequent)
words in that document.  Each word is printed in a given font and
scaled by a factor roughly proportional to its importance (the same is
done with the names of towns and cities on geographic maps, for
example).  The printed words are arranged without overlap and tightly
packed into some shape (usually a rectangle).  Tag clouds look
similar; they consist of keyword metadata (tags) that have been
attributed to resources in some collection such as web pages or
photos.

Wordle~\cite{viegas2009} is a popular tool for drawing word or tag clouds.
The Wordle website allows users to
upload a list of words and, for each word, its relative importance.
The user can further select font, color scheme, and decide whether all
words must be placed horizontally or whether words can also be placed
vertically.  The tool then computes a placement of the words, each
scaled according to its importance, such that no two words overlap.
Generally, the drawings are very compact and aesthetically appealing.

In the automated analysis of text one is usually not just
interested in the most important words and their frequencies, but also
in the connections between these words.  For example, if a pair of words often
appears together in a sentence, then this is often seen as evidence
that this pair of words is linked semantically~\cite{l-wcdbc-JNLE02}.
In this case,
it makes sense to place the two words close to each other in the word
cloud that visualizes the given text.
This is captured by an input graph $G=(V,E)$ of
desired contacts.  We are also given, for each vertex $v \in V$, the
dimensions (but not the position) of a \emph{box}~$B_v$, that is, an
axis-aligned rectangle.  We denote the height and width of~$B_v$
by~$h(B_v)$ and~$w(B_v)$, respectively, or, more briefly, by $h(v)$
and~$w(v)$.  For each edge $e=(u,v)$ of~$G$, we are given a positive
number $p(e)=p(u,v)$, that corresponds to the \emph{profit} of~$e$.
For ease of notation,
we set $p(u,v)=0$ for any non-edge $(u,v) \in V^2 \setminus E$ of $G$.

Given a box~$B$ and a point~$q$ in the plane, let~$B(q)$ be a
placement of $B$ with lower left corner~$q$.  A \emph{representation}
of~$G$ is a map~$\lambda \colon V \to \mathbb{R}^2$ such that for any
two vertices~$u \ne v$, it holds that~$B_u(\lambda(u))$
and~$B_v(\lambda(v))$ are interior-disjoint.  Boxes may \emph{touch},
that is, their boundaries may intersect.  If the intersection is
non-degenerate, that is, a line segment of positive length, we say
that the boxes are \emph{in contact}.  We say that a
representation~$\lambda$ \emph{realizes}~an edge $(u,v)$ of~$G$ if
boxes~$B_u(\lambda(u))$ and~$B_v(\lambda(v))$ are in contact.

This yields the problem \emph{Contact Representation of Word Networks}
(\textsc{Crown}):
Given an edge-weighted graph~$G$ whose vertices correspond to
boxes, find a representation of~$G$ with the vertex boxes such that
every edge of~$G$ is realized.  In this paper, we study the
optimization version of \textsc{Crown}, \crown, where  the aim is to
maximize the total profit (that is, the sum of the weights) of the
realized edges.
We also consider the unweighted version of the problem, where all
desired contacts yield a profit of~1.

\paragraph{Previous Work.}

Barth et al.~\cite{bfklnopsuw-swcrh-LATIN14} recently introduced
\crown and showed that the problem is strongly NP-hard even for trees
and weakly NP-hard even for stars.  They presented an exact algorithm
for cycles and approximation algorithms for stars, trees,
planar graphs, and graphs of constant maximum degree; see the first
column of Table~\ref{table:res}.  Some of their
solutions use an approximation
algorithm with ratio $\alpha = e/(e-1) \approx 1.58$~\cite{Fleischer2011}
for the \prob{Generalized Assignment Problem} (\gap), defined as
follows: Given a set of bins with capacity constraints and a
set of items that possibly have different sizes and values for each
bin, pack a maximum-valued subset of items into the bins.  The problem
is APX-hard~\cite{mkpptas}.

\crown is related to finding \emph{rectangle representations} of
graphs, where vertices are represented by
axis-aligned rectangles with non-intersecting interiors and edges
correspond to rectangles with a common boundary of non-zero length.
Every graph that can be represented this way is planar and every
triangle in such a graph is a facial triangle.  These two conditions
are also sufficient to guarantee a rectangle representation~\cite{buchsbaum08}.
Rectangle representations play an important role in VLSI layout, cartography,
and architecture (floor planning). In a recent survey,
Felsner~\cite{felsner2013rectangle} reviews many rectangulation
variants. Several interesting problems arise when the
rectangles in the representation are restricted.
Eppstein et al.~\cite{eppstein2012area}
consider rectangle representations which can realize any given area-requirement
on the rectangles, so-called \emph{area-preserving rectangular
  cartograms}, which were introduced by Raisz~\cite{r-rsc-34} already
in the 1930s. Unlike cartograms, in our setting there is no
inherent geography, and hence, words can be positioned
anywhere.  Moreover, each word has fixed dimensions enforced by its
importance in the input text, rather than just fixed area.
N\"ollenburg et al.~\cite{nollenburg2013edge} recently considered a
variant where the edge weights prescribe the length of the desired
contacts.

Finally, the problem of computing semantics-aware word clouds
is related to classic graph layout problems, where the goal is to draw graphs so
that vertex labels are readable and Euclidean distances between
pairs of vertices are proportional to the underlying graph distance
between them. Typically, however, vertices are treated as points
and label overlap removal is a post-processing step~\cite{dwyer05,gh10}.
Most tag cloud and word cloud tools such as Wordle~\cite{viegas2009}
do not show the semantic relationships between words, but
force-directed graph layout heuristics are sometimes used to add such
functionality~\cite{swcTR,cui2010context,semantic_collections,wu2011semantic}.
For an example output of such a tool, see Fig.~\ref{fig:example}.

\begin{figure}[b]
  \centering
  \fbox{\includegraphics[width=.8\textwidth]{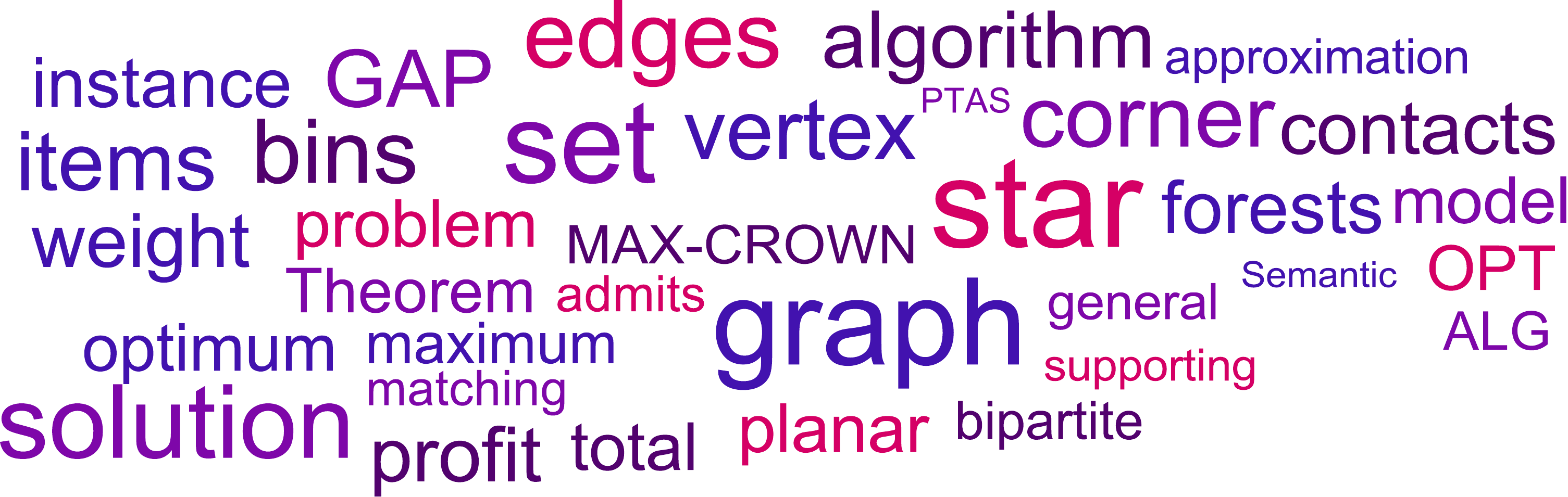}}
  \caption{Semantics-preserving word cloud for the 35 most
    ``important'' words in this paper.  Following the text processing
    pipeline of Barth et al.~\cite{swcTR}, these are the words ranked
    highest by LexRank~\cite{Erkan04}, after removal of stop words
    such as ``the''.  The edge profits are proportional to
    the relative frequency with which the words occur in the same
    sentences.  The layout algorithm of Barth et al.~\cite{swcTR}
    first extracts a heavy star forest from the weighted input graph as in
    Theorem~\ref{thm:approx-gen-graph-weighted} and then applies a
    force-directed post-processing.}
  \label{fig:example}
\end{figure}


\paragraph{Model.}
We consider two different models. In Sections~\ref{sub:weighted} 
and~\ref{sub:unweighted}, we do not count
\emph{point contacts}, that is, we consider two boxes in
contact only if their intersection is a line segment of positive
length. Hence, the contact graph of the boxes
is planar.  This model is used in most work on rectangle contact representations.
In Section~\ref{sec:point-model}, we describe how to
modify our algorithms to guarantee $O(1)$-approximations also in the model
that allows and rewards point contacts.
We allow words only to be placed horizontally.

\paragraph{Our Contribution.}

Known results and our contributions to \crown are shown in
Table~\ref{table:res}.
\begin{table}[tb]
  \centering
  \caption{Previously known and new results for the unweighted
    and weighted versions of \crown (for $\alpha \approx 1.58$ and any
    $\eps>0$). The exact approximation factors are denoted in the
		corresponding theorems.}
  \label{table:res}
  \medskip

  \begin{tabular}{l@{~~}cc@{~~~}c@{\qquad}c@{~~~}c}
    \toprule
    & \multicolumn{3}{c}{Weighted} &
    \multicolumn{2}{c}{Unweighted} \\
    \cmidrule(l{1ex}r{5ex}){2-4}
    \cmidrule(l{1ex}r{1ex}){5-6}
    Graph class & Ratio~\cite{bfklnopsuw-swcrh-LATIN14} & Ratio [new]
    & Ref. & Ratio & Ref.\\
    \midrule 
    cycle, path & $1$ & & & & \\ 
    star & $\alpha$ 
    & $1 + \eps$ & Thm.~\ref{thm:previous-improved} & & \\
    tree & $2 \alpha$ 
    & $2 + \eps$ & Thm.~\ref{thm:previous-improved}
    & $2$ & Thm.~\ref{thm:2-approx-tree-unweighted} \\
    & NP-hard \\
    max-degree $\Delta$ & $\lfloor(\Delta + 1) / 2\rfloor$ \\
    planar max-deg.~$\Delta$ & & & & $1+\eps$ &
    Thm.~\ref{thm:ptas-bounded-deg-planar} \\
    outerplanar & & $3+\eps$ &
    Thm.~\ref{thm:approx-outerplanar-weighted} & & \\
    planar & $5 \alpha$ 
    & $5 + \eps$ & Thm.~\ref{thm:previous-improved} & & \\
    bipartite & & APX-complete & Thm.~\ref{thm:bipartite-weighted-apx-complete} & & \\
		~~~without point contacts& & $\approx 8.4$ &
    Thm.~\ref{thm:approx-bipartite-weighted} & & \\
		~~~with point contacts& & $\approx 9.5$ &
    Thm.~\ref{thm:point-contacts-weighted} & & \\
    general \\
		~~~without point contacts & & $\approx 16.9$ (rand.) &
    Thm.~\ref{thm:randomized-approx-gen-graph-weighted}
    & $\approx 13.4$ & Thm.~\ref{thm:approx-gen-graph-unweighted}\\
    & & $\approx 21.1$ (det.) &
    Thm.~\ref{thm:approx-gen-graph-weighted} & & \\
		~~~with point contacts & & $\approx 19$ (rand.) &
    Thm.~\ref{thm:point-contacts-weighted}
    & $\approx 16.5$ & Thm.~\ref{thm:point-contacts-unweighted}\\
    & & $\approx 22.1$ (det.) &
    Thm.~\ref{thm:point-contacts-weighted} & & \\
    \bottomrule
  \end{tabular}
\end{table}
Note that the results of Barth et al.\ \cite{bfklnopsuw-swcrh-LATIN14}
in column~1 are simply based on existing decompositions of the
respective graph classes into star forests or cycles.

Our results rely on a variety of algorithmic tools.  First, we devise
sophisticated decompositions of the input graphs into heterogeneous
classes of subgraphs, which also requires a more general combination
method than that of Barth et al. Second, we use randomization to
obtain a simple constant-factor approximation for general weighted
graphs.  Previously, such a result was not even known for unweighted
bipartite graphs.  Third, to obtain an improved algorithm for the unweighted
case, we prove a lower bound on the size of a matching in a planar
graph of high average degree.  Fourth, we use a planar separator
result of Frederickson \cite{frederickson87shortestpaths-planar} to
obtain a polynomial-time approximation scheme (PTAS) for
degree-bounded planar graphs.

Our other main result is the use of the combination lemma, which,
among others, yielded the first approximation algorithms for bipartite
and for general graphs; see Section~\ref{sub:weighted}.
For general graphs, we present a simple
randomized solution (based on the solution for bipartite graphs) and
a more involved deterministic algorithm.  For trees, planar graphs of
constant maximum degree, and general graphs, we have improved results
in the unweighted case; see Section~\ref{sub:unweighted}.  For the
model with point contacts, we show how to adjust the approximation algorithms for
bipartite and general graphs; see Section~\ref{sec:point-model}.
Finally, we show APX-completeness for bipartite graphs of maximum degree~9 (see
Section~\ref{sub:hardness}) and list some open problems (see
Section~\ref{sec:open}).

\paragraph{Runtimes.} Most of our algorithms involve approximating a
number of \gap instances as a subroutine, using either the PTAS
\cite{bkv-atumd-SICOMP11} if the number of bins is constant or the
approximation algorithm of Fleischer et~al.~\cite{Fleischer2011} for
general instances. Because of this, the runtime of
our algorithms consists mostly of approximating \gap instances.
Both algorithms to approximate \gap instances solve linear programs, so we
refrain from explicitly stating the runtime of these algorithms.

For practical purposes, one can use a purely combinatorial approach
for approximating \gap~\cite{CKR06}, which utilizes an algorithm for
the \prob{Knapsack} problem as a subroutine. The algorithm translates into
a $3$-approximation for \gap running in $O(NM)$ time (or a
$(2+\eps)$-approximation running in $O(MN\log1/\eps + M/\eps^4)$ time), where $N$ is the number
of items and $M$ is the number of bins. In our setting, the simple $3$-approximation implies
a randomized $32$-approximation (or a deterministic $40$-approximation) algorithm with running time
$O(|V|^2)$ for \crown on general weighted graphs.


\section{Some Basic Results}
\label{sec:preliminaries}

In this section, we present two technical lemmas that will help us to
prove our main results in the following two sections where we treat
the weighted and unweighted cases of \crown.  The second lemma
immediately improves the results of Barth et
al.~\cite{bfklnopsuw-swcrh-LATIN14} for stars, trees, and
planar graphs.

\subsection{A Combination Lemma}

Several of our algorithms cover the input graph with subgraphs that
belong to graph classes for which the \crown problem is known to admit
good approximations.  The following lemma allows us to
combine the solutions for the subgraphs.  We say that a graph $G=(V,
E)$ is \emph{covered} by graphs $G_1=(V,E_1), \dots, G_k=(V,E_k)$ if
$E = E_1 \cup \dots \cup E_k$.

\begin{lemma}
  \label{lem:partitioned-approximation}
  Let graph $G=(V,E)$ be covered by graphs~$G_1,G_2,\dots,G_k$.  If,
  for $i=1,2,\ldots,k$, weighted \crown on graph~$G_i$ admits an
  $\alpha_i$-approximation, then weighted \crown on $G$ admits a
  $\left(\sum_{i=1}^k \alpha_i\right)$-approximation.
\end{lemma}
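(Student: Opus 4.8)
The plan is to run each subgraph's approximation algorithm separately and then simply \emph{output the single best} of the resulting representations, rather than trying to merge them geometrically (merging would generically create box overlaps and is not available to us). The key structural observation is that a representation is just an interior-disjoint placement of the vertex boxes, a notion that depends only on~$V$ and not on the edge set. Hence any representation of a subgraph~$G_i$ is automatically a valid representation of~$G$, and when read as a solution for~$G$ it realizes at least all the edges of~$G_i$ that it realizes.

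First I would set up the accounting against a fixed optimal representation~$\lambda^*$ of~$G$. Let $R\subseteq E$ be its set of realized edges, so that $p(R)=\opt$. Because $G$ is covered by the~$G_i$, every edge of~$R$ lies in some~$E_i$; assigning each edge of~$R$ to one such index partitions~$R$ into sets $R_1,\dots,R_k$ with $\sum_{i=1}^k p(R_i)=\opt$. Viewing~$\lambda^*$ as a representation of~$G_i$, it realizes every edge of~$R_i$ (these edges lie in~$E_i$ and are in contact under~$\lambda^*$), so the optimum $\opt_i$ of weighted \crown on~$G_i$ satisfies $\opt_i\ge p(R_i)$.

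Next I would invoke the given approximations: for each~$i$, the $\alpha_i$-approximation yields a representation~$\lambda_i$ whose profit on~$G_i$ is $\alg_i\ge \opt_i/\alpha_i \ge p(R_i)/\alpha_i$, hence $p(R_i)\le \alpha_i\,\alg_i$. Summing over~$i$ gives
\[
  \opt \;=\; \sum_{i=1}^k p(R_i) \;\le\; \sum_{i=1}^k \alpha_i\,\alg_i \;\le\; \Bigl(\sum_{i=1}^k \alpha_i\Bigr)\max_{1\le i\le k}\alg_i ,
\]
so $\max_i \alg_i \ge \opt/\sum_{i=1}^k\alpha_i$. Since each~$\lambda_i$, viewed as a representation of the whole graph~$G$, realizes profit at least~$\alg_i$, outputting whichever~$\lambda_i$ realizes the most profit on~$G$ (which is checkable in polynomial time) gives a solution of value at least $\max_i\alg_i\ge \opt/\sum_{i=1}^k\alpha_i$, i.e.\ a $\bigl(\sum_{i=1}^k \alpha_i\bigr)$-approximation.

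The main thing to get right is conceptual rather than computational: the temptation is to combine the $k$ solutions into one drawing, but the correct and simpler move is to keep them separate and select the best, leveraging that a subgraph representation is already a legal representation of the whole graph. The only inequality doing real work is the averaging step $\sum_{i=1}^k \alpha_i \alg_i \le \bigl(\sum_{i=1}^k \alpha_i\bigr)\max_i \alg_i$, which is immediate.
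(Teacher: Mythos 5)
Your proof is correct and follows essentially the same approach as the paper: run each $\alpha_i$-approximation on its subgraph, output the most profitable of the resulting representations, and bound $\opt$ by charging each optimally realized edge to a covering subgraph (the paper phrases this as $\opt\le\sum_i\opt_i$, which is exactly your partition argument), finishing with the same averaging step. Your write-up merely makes explicit two points the paper leaves implicit---that a representation of a subgraph is automatically a valid representation of $G$, and the justification of $\opt_i \ge p(R_i)$---but the underlying argument is identical.
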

\begin{proof}
  Our algorithm works as follows.  For $i=1,\dots,k$, we apply the
  $\alpha_i$-approximation algorithm to~$G_i$ and report the result
  with the largest profit as the result for~$G$.  We show that this
  algorithm has the claimed performance guarantee.  For the
  graphs $G, G_1, \dots, G_k$, let $\opt,\opt_1,\dots,\opt_k$ be the
  optimum profits and let $\alg,\alg_1,\dots,\alg_k$ be the profits of the
  approximate solutions.  By definition, $\alg_i\geq\opt_i/\alpha_i$ for
  $i=1,\dots,k$.  Moreover, $\opt\leq\sum_{i=1}^k\opt_i$ because the
  edges of~$G$ are covered by the edges of $G_1,\dots,G_k$.  Assume,
  \WLOG, that $\opt_1/\alpha_1=\max_i (\opt_i/\alpha_i)$.  Then
  \begin{displaymath}
    \alg \;=\; \alg_1 \;\ge\; \frac{\opt_1}{\alpha_1} \;\ge\;
    \frac{\sum_{i=1}^k\opt_i}{\sum_{i=1}^k\alpha_i} \;\ge\;
    \frac{\opt}{\sum_{i=1}^k\alpha_i}\,.  \qedhere
  \end{displaymath}
\end{proof}

\subsection{Improvement on existing approximation algorithms}
\label{sub:gap-ptas}

The approximation algorithms for stars, trees and planar graphs 
provided by Bekos et al.~\cite{bfklnopsuw-swcrh-LATIN14} use
an $\alpha$-approximation algorithm for \\GAP instances.
We prove that these instances require only a constant number of
bins and thus can be approximated using the PTAS of
Briest et al.~\cite{briest05apx-mechanism-design}.

\begin{lemma}[\cite{bkv-atumd-SICOMP11}]
  \label{lem:gap-ptas}
  For any $\epsilon>0$, there is a $(1+\epsilon)$-approximation
  algorithm for \gap with a constant number of bins. The algorithm takes $n^{O(1/\epsilon)}$ time.\qed{}
\end{lemma}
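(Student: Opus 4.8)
The plan is to obtain the scheme by combining exhaustive enumeration of the few high-value items with LP rounding for the many low-value items, exploiting that the number of bins $m$ is a constant. Throughout, let $\opt$ denote the optimum profit and fix $k=\lceil m/\eps\rceil$, a constant.

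First I would split the items of an (unknown) optimal solution into \emph{large} and \emph{small} ones without ever computing $\opt$. Let $\tau$ be the value of the $k$-th most valuable item in the optimal packing. Since the $k$ largest items each have value at least $\tau$ and together contribute at most $\opt$, we get $k\tau\le\opt$ and hence $\tau\le\eps\opt/m$; moreover every remaining item has value at most $\tau$. I would therefore enumerate all candidates for the set of $k$ largest items: all subsets of at most $k$ items together with an assignment of each chosen item to one of the $m$ bins. There are at most $\binom{n}{\le k}\,m^{k}=n^{O(1/\eps)}$ such configurations, and for each I would discard those that violate a bin capacity.

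For each feasible guess of the large items, I would fill the residual capacities with small items by solving the natural assignment LP, with a variable $x_{ij}\in[0,1]$ for every pair of an unused item $i$ and a bin $j$ with $v_{ij}\le\tau$, subject to $\sum_j x_{ij}\le 1$ for each item and the residual capacity constraint for each bin, maximizing $\sum_{ij}v_{ij}x_{ij}$. Taking a basic optimal solution, a standard argument shows that its support forms a forest in the item–bin incidence graph, so that all but $O(m)$ items are assigned integrally. Deleting the $O(m)$ fractionally assigned items yields a feasible integral packing of small items whose profit is at least the LP optimum minus $O(m)\cdot\tau=O(\eps)\cdot\opt$.

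Correctness follows by analyzing the guess that matches the optimum: for it, the small items of the optimal solution form a feasible integral point of the LP, so the LP optimum is at least their total profit, and rounding loses only $O(\eps)\,\opt$; adding the (exactly reproduced) profit of the large items shows the reported solution has profit at least $(1-O(\eps))\,\opt$, which after rescaling $\eps$ gives a $(1+\eps)$-approximation. Each of the $n^{O(1/\eps)}$ guesses needs one polynomial-time LP solve, for a total of $n^{O(1/\eps)}$ time. I expect the main obstacle to be the LP-rounding step: one must verify, via the basicness and forest structure of the assignment polytope, that only $O(m)$ items remain fractional (in particular that a vertex cannot contain many degree-one fractional items at a common saturated bin), since this is precisely what bounds the rounding loss by a constant number of small-value items.
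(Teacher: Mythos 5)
The paper offers no proof of this lemma to compare against: it is imported verbatim from Briest et al.~\cite{bkv-atumd-SICOMP11} and stamped with a \emph{qed}, so your task was in effect to reconstruct the cited result, and your reconstruction succeeds. Your argument is the standard and correct PTAS design for \gap with $m=O(1)$ bins: guess the $k=\lceil m/\eps\rceil$ most profitable items of the optimum together with their bin assignments ($\binom{n}{\le k}m^k=n^{O(1/\eps)}$ guesses, since $m$ is constant), restrict the remaining items to item--bin pairs of value at most the threshold $\tau$, solve the assignment LP on the residual capacities, and round a basic optimal solution, losing only items of value at most $\tau\le\eps\opt/m$. The one technical inaccuracy is your claim that the support of a basic solution is a \emph{forest}: it need not be (each connected component of the item--bin support graph can contain one cycle, i.e., it is a pseudoforest). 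However, the conclusion you actually need survives by the usual tight-constraint count: if $a$, $b$, $c$ denote the numbers of items with one integral positive variable, one fractional positive variable, and at least two positive variables, respectively, then the number of positive variables is at least $a+b+2c$ while the number of tight item and bin constraints is at most $a+c+m$, whence $b+c\le m$; so at most $m$ items are fractional, and discarding them costs at most $m\tau\le\eps\opt$. With that repair (which you correctly flagged as the crux), your proof is complete and self-contained, whereas the paper simply buys the result as a black box from the mechanism-design literature; your route has the advantage of making the $n^{O(1/\eps)}$ running time and the role of the constant bin count explicit.
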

Using Lemmas~\ref{lem:partitioned-approximation}
and~\ref{lem:gap-ptas}, we improve the approximation algorithms of
Barth et al.~\cite{bfklnopsuw-swcrh-LATIN14}.
\begin{theorem}
  \label{thm:previous-improved}
  Weighted \crown admits
  a~$(1 + \eps)$-approximation algorithm on stars,
  a~$(2 + \eps)$-approximation algorithm on trees,
  and a~$(5 + \eps)$-approximation algorithm on planar graphs.
\end{theorem}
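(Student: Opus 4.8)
The plan is to recycle the graph decompositions of Barth et al.~\cite{bfklnopsuw-swcrh-LATIN14}, but to approximate the \gap subroutine by the PTAS of Lemma~\ref{lem:gap-ptas} rather than by a general $\alpha$-approximation. The only new ingredient needed is the observation that the \gap instances produced by their algorithms use a \emph{constant} number of bins.

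First I would handle a single star. Following Barth et al., \crown on a star reduces to \gap as follows: the center box~$B_c$ offers four sides on which leaf boxes may be placed, so the instance has one bin per side --- the top and bottom bins with capacity~$w(c)$ and the left and right bins with capacity~$h(c)$ --- while each leaf is an item whose size in a given bin is its width (top/bottom) or height (left/right) and whose value is the profit of the incident edge. Since every contact with~$B_c$ is charged to exactly one side, this is a \gap instance with the constant number~$4$ of bins, and Lemma~\ref{lem:gap-ptas} gives a $(1+\eps)$-approximation. A single star is a star forest, and the components of a star forest are vertex-disjoint, so they can be drawn in mutually far-apart regions of the plane and solved independently; taking the union of the per-component $(1+\eps)$-approximations yields a $(1+\eps)$-approximation for the whole star forest. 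This already proves the star bound.

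Next I would derive the tree and planar bounds from the combination lemma (Lemma~\ref{lem:partitioned-approximation}). Every tree can be covered by two star forests and, as used by Barth et al., every planar graph can be covered by five star forests. Plugging the ratio $\alpha_i = 1+\eps$ for each forest into Lemma~\ref{lem:partitioned-approximation} gives total ratios $2(1+\eps)$ for trees and $5(1+\eps)$ for planar graphs. Replacing $\eps$ by $\eps/2$ resp.\ $\eps/5$ (which only changes the polynomial exponent in the PTAS running time) yields the claimed $(2+\eps)$- and $(5+\eps)$-approximations.

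I expect the main obstacle to be the first step: confirming that the \gap instance for a star really has only a constant number of bins and that its optimum faithfully captures the optimal star representation. The delicate points are that each leaf realizes at most one edge of the star and is charged to exactly one of the four sides of~$B_c$ (so no profit is double counted), and that the per-side packing constraint correctly reflects when interior-disjoint boxes can touch a side of~$B_c$ with positive-length contact. These are exactly the structural facts established for the reduction of Barth et al.; once the constant bin count is in hand, the remainder is the routine combination of Lemmas~\ref{lem:gap-ptas} and~\ref{lem:partitioned-approximation} together with the $\eps$-rescaling.
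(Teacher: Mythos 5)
There is a genuine gap, and it sits exactly at the point you flagged as delicate: the four-bin reduction you describe for a star is not approximation-preserving in the model of Sections~\ref{sub:weighted} and~\ref{sub:unweighted}, where a contact must have positive length but a leaf box is otherwise free to \emph{overhang a corner} of the center box. Charging each leaf's full width (or height) to one side overconstrains the problem: a leaf touching the top side only needs its \emph{contact segment} to fit on that side, not its whole width. Concretely, let the center and all leaves be unit squares. Six leaves can simultaneously touch the center without point contacts---two on top and two on the bottom, each overhanging a corner by $1/2$, plus one on each vertical side---so $\opt=6$; but in your \gap instance every bin has capacity~$1$ and every item has size~$1$, so at most four items can be packed, and your algorithm realizes at most four contacts. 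Its ratio on stars is therefore at least $3/2$, not $1+\eps$, and via Lemma~\ref{lem:partitioned-approximation} the error propagates to the tree and planar bounds.

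The paper's proof is built around precisely this phenomenon. Its \gap instance has \emph{eight} bins: four corner bins of capacity~$1$ (item size~$1$) in addition to the four side bins. This models \crown \emph{with point contacts} exactly, after normalizing so that any box touching a corner of the center is slid into a point contact with that corner. The second, and technically central, step is converting back to the model without point contacts: the paper scales all lengths to integers and establishes a correspondence between solutions without point contacts and point-contact solutions of a modified instance in which either the two horizontal or the two vertical sides of the center are shrunk by $1/2$; the PTAS runs the point-contact algorithm on both modified instances and returns the better solution. This model-conversion argument is entirely absent from your proposal, and without it (or the corner bins) the claimed reduction undercounts the optimum by a constant factor. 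The remainder of your argument---treating star-forest components independently, covering trees by two and planar graphs by five star forests, applying Lemma~\ref{lem:gap-ptas} and Lemma~\ref{lem:partitioned-approximation}, and rescaling $\eps$---matches the paper and is correct.
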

\begin{proofWithWrapfig}
  By Lemma~\ref{lem:partitioned-approximation}, the claim for stars
  implies the other two claims since a tree can be
  covered by two star forests and a planar graph can be
  covered by five star forests in polynomial time~\cite{hakimi96}.
  
  \begin{wrapfigure}{r}{4cm} \vspace{-2ex}
    \centering
    \includegraphics{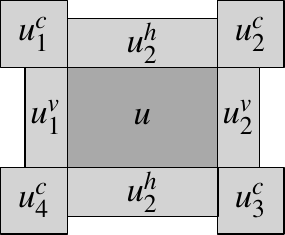}
    \caption{Notation for the PTAS for stars}
    \label{fig:star-ptas}
  \end{wrapfigure}

  We now show that we can use Lemma~\ref{lem:gap-ptas} to get a PTAS
  for stars.  First, we give the PTAS for the model with point contacts.

  Let $u$ be the center vertex of the star.  We create eight bins: four
  \emph{corner bins} $u_1^c, u_2^c, u_3^c$, and $u_4^c$ modeling
  adjacencies on the four corners of the box $u$, two \emph{horizontal
  bins}~$u_1^h$ and~$u_2^h$ modeling adjacencies on the top and bottom
  side of~$u$, and two \emph{vertical bins}~$u_1^v$ and~$u_2^v$ modeling
  adjacencies on the left and right side of~$u$; see
  Fig.~\ref{fig:star-ptas}.
	The capacity of the corner bins is 1, the capacity of the 
	horizontal bins is the width $w(u)$ of $u$, and the capacity of the 
	vertical bins is the height $ h(u)$ of $u$.  Next, we introduce an 
	item~$i(v)$ for any leaf vertex~$v$ of the star.  The size of $i(v)$
	is~1 in any corner bin, $w(v)$ in any horizontal bin, and $h(v)$ 
	in any vertical bin.  The profit of~$i(v)$ in any bin is the
	profit~$p(u,v)$ of the edge~$(u,v)$.
  
  Note that any feasible solution to the \crown instance can be
  normalized so that any box that touches a corner of $u$ has a
  point contact with $u$.  Hence, the above is an
  approximation-preserving reduction from weighted \crown on stars
  (with point contacts) to \gap.  By Lemma~\ref{lem:gap-ptas}, we
  obtain a PTAS.
    
	We first assume that all boxes have integral edge lengths, which 
	can be accomplished by scaling.  Consider a feasible solution 
	without point contacts.  We now modify the solution as follows.  
	Each box that touches a corner of $u$ is moved so that it has a 
	point contact with this corner.  Afterwards, we move some of the 
	remaining boxes until all corners of $u$ have point contacts or 
	until we run out of boxes.  This yields a solution with point 
	contacts in which there are two opposite sides of $u$---say the two 
	horizontal sides---which either do not touch any box or from which 
	we removed one box during the modification.  Now observe that, if 
	we shrink the two horizontal sides by an amount of $1/2$, then all 
	contacts can be preserved since there was a slack of at least $1$ 
	at both horizontal sides.  Conversely, observe that any feasible 
	solution with point contacts to the modified instance with shrunken 
	horizontal sides can be transformed into a solution without point 
	contacts since we always have a slack of at least $1/2$ on both 
	horizontal sides.  This shows that there is a correspondence 
	between feasible solutions without point contacts and feasible 
	solutions with point contacts to a modified instance where we 
	either shrink the horizontal or the vertical sides by $1/2$.  The 
	PTAS for \crown on stars consists in applying a PTAS to two 
	instances of \crown with point contacts where we shrink the 
	horizontal or vertical sides, respectively, and in outputting the 
	better of the two solutions.
 \end{proofWithWrapfig}

\section{The Weighted Case}
\label{sub:weighted}

In this section, we provide new approximation algorithms for more
involved classes of (weighted) graphs than in the previous section.
Recall that~$\alpha = e/(e-1) \approx 1.58$.
First, we give a $(3 + \eps)$-approximation for outerplanar graphs.
Then, we present a $16\alpha/3$-approximation for bipartite graphs.
For general graphs, we provide a simple randomized
$32\alpha/3$-approximation and a deterministic
$40\alpha/3$-approximation.

\begin{theorem}
  \label{thm:approx-outerplanar-weighted}
  Weighted \crown on outerplanar graphs admits a $(3+\eps)$-ap\-prox\-imation.
\end{theorem}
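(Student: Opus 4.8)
The plan is to reduce the outerplanar case to the star case (Theorem~\ref{thm:previous-improved}) via the combination lemma (Lemma~\ref{lem:partitioned-approximation}), exactly in the spirit of how trees and planar graphs were handled. Since Lemma~\ref{lem:partitioned-approximation} turns a cover by $k$ star forests into a $(k(1+\eps))$-approximation, the whole theorem follows if I can show that every outerplanar graph can be covered by three star forests in polynomial time. So the real content is a structural decomposition claim: the edge set of any outerplanar graph decomposes into three star forests (equivalently, its \emph{star arboricity} is at most $3$).

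First I would recall that outerplanar graphs are sparse: they are planar, triangle-containment is facial, and crucially every outerplanar graph has a vertex of degree at most~$2$. This degeneracy is the lever. The standard way to bound star arboricity is to produce an acyclic orientation in which every vertex has small out-degree and then color the out-edges. Concretely, I would repeatedly peel off a vertex of degree $\le 2$, orienting its (at most two) incident edges \emph{outward} from it at the moment of removal; this yields an orientation of $G$ with maximum out-degree at most~$2$. Each vertex then owns at most two out-edges, and assigning those out-edges two colors (say, a ``first'' and ``second'' out-edge per vertex) partitions $E$ into two \emph{forests}. The subtlety is that a forest is not automatically a star forest, so a naive arboricity bound of~$2$ does not directly give star arboricity~$2$; converting forests to star forests generally costs a factor of two, which would only yield four star forests and a $(4+\eps)$-bound — worse than claimed.

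To actually reach three star forests I would argue more carefully with the bounded-out-degree orientation. The key observation is that if every vertex has out-degree at most~$2$, I can two-color the out-edges so that at each vertex its out-edges receive distinct colors; within a single color class, every vertex has out-degree at most~$1$, so each color class is a \emph{functional graph} whose underlying undirected graph is a \emph{pseudoforest} (each component has at most one cycle). A pseudoforest has star arboricity at most~$2$, but one can do better: the two color classes together are what must be packed into three stars forests. The clean route is the known result that planar (indeed outerplanar) graphs have bounded star arboricity; I would invoke the degeneracy-based bound that a graph which is $d$-degenerate has star arboricity at most $d+1$ — but since outerplanar graphs are $2$-degenerate this gives~$3$ directly. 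I would therefore state and use: an outerplanar graph is $2$-degenerate, hence decomposes into three star forests, and then the result is immediate from Lemmas~\ref{lem:partitioned-approximation} and the star PTAS.

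The main obstacle I anticipate is precisely pinning down the ``three'' rather than ``four'': the passage from a low-arboricity or low-out-degree orientation to an actual star-forest cover is where a careless argument loses a factor. The cheapest rigorous justification is to cite or re-derive the star-arboricity bound $\mathrm{sa}(G)\le d+1$ for $d$-degenerate graphs (applied with $d=2$), or, if I want a self-contained argument, to build the three star forests explicitly from the $2$-degeneracy ordering: process vertices in reverse elimination order, and when a vertex $v$ is eliminated it has at most two already-oriented out-neighbors; I assign $v$'s up-to-two back-edges to two of the three color classes in a way that keeps each color class a disjoint union of stars centered at the ``later'' endpoint, using the third color only to break the occasional conflict created by a vertex serving simultaneously as a star center in one class and a leaf in another. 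Verifying that three colors always suffice to avoid a path of length~$2$ within a single class is the one step that needs genuine care; everything else is bookkeeping, and the approximation guarantee then falls out of the combination lemma with $k=3$ and $\alpha_i = 1+\eps$.
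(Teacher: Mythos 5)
Your high-level reduction is exactly the paper's: cover the outerplanar graph by three star forests, then apply Lemma~\ref{lem:partitioned-approximation} with the $(1+\eps)$-approximation for stars from Theorem~\ref{thm:previous-improved}. The gap is in the one substantive step, namely that every outerplanar graph decomposes into three star forests. You justify this by invoking, as a ``known result,'' that every $d$-degenerate graph has star arboricity at most $d+1$; this is not a known theorem (you give no reference, and none exists in the standard literature), and your self-contained fallback argument does not establish it. Concretely, your greedy/inductive scheme over the $2$-degeneracy ordering can get stuck: when the peeled vertex $v$ is put back, its two neighbors $u$ and $w$ need not be adjacent, and nothing in your invariant prevents the inductively built decomposition from having both $u$ and $w$ be star centers only in the \emph{same} class while being leaves in the other two classes. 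In that state at most one of the edges $(v,u)$, $(v,w)$ can be placed: an edge can only enter a class where its non-$v$ endpoint is a center (making $v$ a leaf, which may happen in at most one class) or a class where that endpoint is untouched (so $v$ can be the center), and neither option is available twice. This is precisely the case you flag as ``needing genuine care,'' and it is where the proof is missing; three classes do not suffice by bookkeeping alone. (Your side remark that every pseudoforest has star arboricity at most $2$ is also false: a triangle with a pendant edge at each of its three vertices is a pseudotree that needs $3$ star forests.)

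The paper closes exactly this hole by using outerplanarity beyond $2$-degeneracy. It reduces to \emph{maximal} outerplanar graphs, in which the two neighbors $u,w$ of a degree-$2$ vertex $v$ are adjacent (they form a triangle with $v$), and it maintains the stronger invariant that every vertex is the center of exactly one star across the three forests. That invariant forces adjacent vertices to be centers in distinct forests, so $u$ and $w$ are centers in different forests $F_1$ and $F_2$; the edges $(v,u)$ and $(v,w)$ are then added as leaf edges of those stars, and $v$ becomes the center of a new star in $F_3$, preserving the invariant. Your argument has no counterpart to this use of the triangle structure, and it cannot, since it only assumes degeneracy. A clean repair that avoids proving any new structural claim is to cite the theorem of Hakimi, Mitchem, and Schmeichel~\cite{hakimi96} that outerplanar graphs have star arboricity $3$ (the paper itself cites this result) rather than an unproven assertion about all $2$-degenerate graphs; with that citation, the rest of your reduction goes through.
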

\begin{proof}
  It is known that the star arboricity of an outerplanar graph is~3,
  that is, it can be partitioned into at most three star
  forests~\cite{hakimi96}.  Here we give a simple algorithm for
  finding such a partitioning.

  Any outerplanar graph has degeneracy at most~2, that is, it has a vertex of
  degree at most~2.  We prove that any outerplanar graph~$G$ can be
  partitioned into three star forests such that every vertex of~$G$ is the
  center of only one star.  Clearly, it is sufficient to prove the
  claim for maximal outerplanar graphs in which all vertices have
  degree at least~2.
  We use induction on the number of vertices of~$G$.  The base of the
  induction corresponds to a $3$-cycle for which the claim clearly holds.
  For the induction step, let $v$ be a degree-2 vertex of~$G$ and
  let~$(v, u)$ and~$(v, w)$ be its incident edges.  The graph~$G - v$
  is maximal outerplanar and thus, by induction hypothesis, it can be
  partitioned into star forests~$F_1$,~$F_2$, and~$F_3$ such that~$u$ is
  the center of a star in~$F_1$ and~$w$ is the center of a star in~$F_2$.
  Now we can cover~$G$ with three star forests: we add~$(v, u)$
  to~$F_1$, we add $(v, w)$ to~$F_2$, and we create a new star
  centered at~$v$ in~$F_3$.

  Applying Lemma~\ref{lem:partitioned-approximation}
  and Theorem~\ref{thm:previous-improved} to the star forests
  completes the proof.
\end{proof}

\begin{theorem}
  \label{thm:approx-bipartite-weighted}
  Weighted \crown on bipartite graphs admits a $16\alpha/3(\approx8.4)$-ap\-prox\-imation.
\end{theorem}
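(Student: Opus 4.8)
The plan is to reduce weighted \crown on a bipartite graph $G=(A\cup B,E)$ to the \gap and then to convert the resulting assignment back into an actual box contact representation, losing only a bounded factor in each step. I use throughout that the contact graph of any representation is planar, so the subgraph of $G$ realized by any solution is planar bipartite.

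First I set up the \gap instance, treating the boxes of $A$ as centers. For each $a\in A$ I create four bins, one per side of $B_a$: the top and bottom bins have capacity $w(a)$, the left and right bins have capacity $h(a)$. For each edge $(a,b)$ I create an item of profit $p(a,b)$ that may be packed into a horizontal side-bin of $a$ with size $w(b)$ or into a vertical side-bin of $a$ with size $h(b)$. As in the star PTAS of Theorem~\ref{thm:previous-improved}, corner contacts are handled separately and I move between the point-contact and the no-point-contact model by the same shrinking-by-$1/2$ argument. The point of this instance is that a feasible \gap solution in which each $B$-box is used at most once corresponds to an \emph{$A$-centered star forest}, and such a star forest is always realizable: I place the $A$-boxes far apart in the plane and attach, along each side, the $B$-boxes assigned to it, which fit by the capacity constraints and do not interfere across different $A$-boxes because of the wide separation.

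Second, to bound the optimum I would decompose the (unknown) realized planar bipartite optimum. Charging each realized edge $(a,b)$ to the side of $B_a$ carrying its contact segment shows that the contacts on a single side of $B_a$ have total width (resp.\ height) at most $w(a)$ (resp.\ $h(a)$); hence the realized optimum is, side by side, capacity-feasible for the bins above. Exploiting the planarity of this optimum, I would cover it by a bounded number of one-sided star forests (centered either in $A$ or in $B$), so that the heaviest such piece carries a constant fraction of $\opt$ and is realizable with capacities inherited from the optimal layout. Running the $\alpha$-approximation of Fleischer et al.\ on the $A$-oriented \gap, and on the symmetric $B$-oriented instance, then returns a realizable star forest whose profit is within the product of $\alpha$ and this covering constant of $\opt$; the two orientations are combined through Lemma~\ref{lem:partitioned-approximation}, and the resulting assignment is realized as described above.

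I expect the quantitative heart of the argument — extracting pieces that are \emph{simultaneously} \gap-solvable and geometrically realizable while still capturing a $3/16$-fraction of $\opt$ — to be the main obstacle, since this is where the specific constant $16/3$ is produced. The difficulty is twofold: a \gap solution on $A$-side bins only constrains one endpoint per edge, so a $B$-box may be assigned to many $A$-boxes with no common planar placement, which forces the restriction to one-sided star forests; and bounding how much profit such a restriction loses requires the planarity of the realized optimum rather than a naive per-vertex argument. Balancing the covering number of the decomposition against the realizability constraints, and treating the $A$/$B$ asymmetry via Lemma~\ref{lem:partitioned-approximation}, is what yields the stated $16\alpha/3\approx 8.4$ bound.
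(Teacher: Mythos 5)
Your overall architecture---reduce to \gap with per-side bins for one side of the bipartition, argue that one-sided star forests capture a constant fraction of the planar bipartite optimum, and treat the two orientations symmetrically---is indeed the strategy of the paper's proof. However, two of your steps fail as stated. First, your capacity-feasibility claim is wrong: you charge each realized edge $(a,b)$ to the side of $B_a$ carrying its contact segment and conclude that the optimum packs into the side bins, but your item for $b$ has size $w(b)$ (its full width), whereas only its \emph{contact segment} is guaranteed to lie within the side of $B_a$. A box touching the top side of $B_a$ may overhang one or both corners, so the total width of the boxes touching that side can exceed $w(a)$, and the realized optimum need not be feasible for your four-bin instance at all. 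This is precisely why the paper's instance has \emph{eight} bins per center: four corner bins of capacity~$1$ absorb the overhanging boxes after a normalization that slides each such box into a point contact with a corner---a normalization that preserves adjacencies only in the point-contact model, which is why the reduction must pass through that model rather than ``handling corners separately.''

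Second, your route back from point contacts, the shrinking-by-$1/2$ trick, does not transfer from a single star to a star forest. That argument requires choosing, \emph{for each center}, which pair of opposite sides to shrink, and the correct choice depends on the unknown optimal layout; it cannot be encoded into a single \gap instance, and committing to a global choice (or trying both global choices) only guarantees half of each orientation's optimum per center, degrading the ratio to $8\alpha > 16\alpha/3$. The paper instead post-processes the \emph{solved} \gap solution per center: whenever the three bins along the top (or bottom) side are completely full, it deletes the lightest item there; it does the same for the left/right bins, and keeps, per center, the heavier of the two treatments. A short counting argument (each deleted item is the lightest among at least three, and the rejected orientation bounds the deleted weight) shows that at least $3/4$ of each center's profit survives. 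That factor $3/4$, combined with the explicit covering constant $4$---the planar bipartite optimum has at most $2n-4$ edges, hence splits into two forests by Nash--Williams, each of which splits into one $V_1$-centered and one $V_2$-centered star forest---is exactly what yields $4\cdot\tfrac{4}{3}\cdot\alpha = 16\alpha/3$. Your proposal leaves both the feasibility repair and this constant underived, so as written it does not reach the stated bound.
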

\begin{proof}
  Let~$G = (V, E)$ be a bipartite input graph
  with~$V = V_1 \;\dot{\cup}\; V_2$ and~$E \subseteq V_1 \times V_2$.
  Using~$G$, we build an instance of \gap as follows. For each
  vertex~$u \in V_1$, we create eight
  bins~$u_1^c,u_2^c, u_3^c,u_4^c,u_1^h,u_{2}^h,u_1^v,u_{2}^v$ and
  set the capacities exactly as we did for the star center in
  Theorem~\ref{thm:previous-improved}. Next, we add an
  item~$i(v)$ for every vertex~$v \in V_2$.  The size of~$i(v)$ is,
  again, $1$ in any corner bin, $w(v)$ in any horizontal bin, and
  $h(v)$ in any vertical bin.  For $u \in V_{1}$, the profit of~$i(v)$
  is~$p(u,v)$ in any bin of~$u$.

  It is easy to see that solutions to the \gap instance are
  equivalent to word cloud solutions (with point contacts) in which
  the realized edges correspond to a forest of stars with all star
  centers being vertices of~$V_1$. Hence, we can find an approximate solution of
  profit~$\alg_1' \ge \opt_1'/\alpha$ where~$\opt_1'$ is the profit of an
  optimum solution (with point contacts) consisting of a star forest
  with centers in~$V_1$.

  We now show how to get a solution without point contacts.
  If the three bins on the top side of a vertex~$u$ (two corner bins and
	one horizontal bin) are not completely full, we can slightly move
  the boxes in the corners so that point contacts are avoided.
  Otherwise, we remove the lightest item from one of these bins.
  We treat the three bottommost bins analogously.
  Note that in both cases
  we only remove an item if all three bins are completely full.
  The resulting solution can be realized without point contacts. We do
  the same for the three left and three right bins and 
  choose the heavier of the two solutions.
  It is easy to see that we lose at most~$1/4$
  of the profit for the star center~$u$: Assume that the heaviest
  solution results from removing weight $w_1$ from one of the upper
  and weight $w_2$ from one of the lower bins. As we remove the
  lightest items only, the remaining weight from the upper and lower
  bins is at least $2(w_1 + w_2)$. On the other hand, the weight in
  the two vertical at least $w_1+w_2$; otherwise, dropping everything
  from these vertical bins would be cheaper. Hence, we keep at least
  weight $3(w_1+w_2)$.

	If we do so for all star centers, we get a solution with
	profit~$\alg_1 \ge 3/4 \cdot \alg_1' \ge 3\opt_1'/(4\alpha) \ge
	3\opt_1/(4\alpha)$ where~$\opt_1$ is the profit of an optimum
	solution (without point contacts) consisting of a star forest
	with centers in~$V_1$.
 
  Similarly, we can find a solution of profit~$\alg_2 \ge
  3\opt_2/(4\alpha)$ with star centers in~$V_2$, where~$\opt_2$ is the
  maximum profit that a star forest with centers in~$V_2$ can realize.
  Among the two solutions, we pick the one with larger profit~$\alg = \max
  \left\{ \alg_1,\alg_2 \right\}$.

  \begin{figure}[t]
  \begin{subfigure}[t]{.29\textwidth}
    \centering
    \includegraphics[page=1]{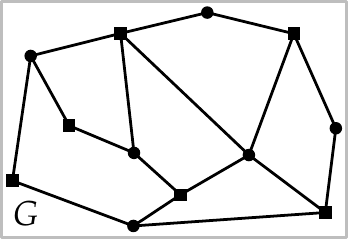}
    \caption{The graph $G^\star$ realized by an optimum solution is
    planar and bipartite.}
    \label{fig:bipartite-decomposition1}
  \end{subfigure}
  \hfill
  \begin{subfigure}[t]{.64\textwidth}
    \centering
    \includegraphics[page=2]{bipartite}
    \hfill
    \includegraphics[page=3]{bipartite}
    \caption{$G^\star$ can be decomposed into two forests $H_1$ and
    $H_2$ and further into four star forests $S_1, S_2$ (black) with
    centers in $V_1$ (disks) and $S_1',S_2'$ (dashed) with centers in
    $V_2$ (boxes).}
    \label{fig:bipartite-decomposition2}
  \end{subfigure}
  \caption{Partitioning the optimum solution in the
    proof of Theorem~\ref{thm:approx-bipartite-weighted}}
  \label{fig:bipartite-decomposition}
\end{figure}

  Let~$G^\star = (V,E^\star)$ be the contact graph realized by a fixed
  optimum solution, and let $\opt=p(E^\star)$ be its total profit.
  We now show that~$\alg \ge 3\opt/(16\alpha)$.
  As~$G^\star$ is a planar bipartite graph,~$|E^\star| \le 2n-4$.
  Hence, we can decompose~$E^\star$ into two forests~$H_1$ and~$H_2$ using
  a result of Nash-Williams~\cite{n-dfgf-JLMS64}; see
  Fig.~\ref{fig:bipartite-decomposition}.
  We can further decompose~$H_1$ into two star forests~$S_1$
  and~$S_{1}'$ in such a way that the star
  centers of~$S_1$ are in~$V_1$ and the star centers of~$S_{1}'$ are
  in~$V_2$. Similarly, we decompose~$H_2$ into a forest~$S_2$ of stars with
  centers in~$V_1$ and a forest~$S_{2}'$ of stars with centers in~$V_2$.
  As we decomposed the optimum solution into four star forests, one of
  them---say~$S_1$---has profit $p(S_1) \ge \opt/4$.  On the
  other hand,~$\opt_1 \ge p(S_1)$. Summing up, we get
  \[
    \alg \;\ge\; \alg_1 \;\ge\; 3\opt_1/(4\alpha) \;\ge\;
    3p(S_1)/(4\alpha) \;\ge\; 3\opt/(16\alpha). \qedhere
  \]
\end{proof}

\begin{theorem}
  \label{thm:randomized-approx-gen-graph-weighted}
  Weighted \crown on general graphs admits a randomized $32\alpha/3(\approx16.9)$-ap\-prox\-imation.
\end{theorem}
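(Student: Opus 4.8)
The plan is to reduce the general case to the bipartite case of Theorem~\ref{thm:approx-bipartite-weighted} by means of a random bipartition. First I would independently assign each vertex $v \in V$ to one of two classes~$V_1$ or~$V_2$, each with probability~$1/2$. Let~$E'$ be the set of edges of~$G$ whose endpoints fall into different classes; these form a bipartite graph $G' = (V, E')$ with parts~$V_1$ and~$V_2$. I would then run the $16\alpha/3$-approximation of Theorem~\ref{thm:approx-bipartite-weighted} on~$G'$ and output its representation as the solution for~$G$. Note that a representation realizing a set of adjacencies of~$G'$ is simultaneously a valid representation for~$G$, so the profit carries over directly.

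The key step is to lower-bound the expected optimum profit of the random instance~$G'$. Fix an optimum solution for~$G$, let~$G^\star = (V, E^\star)$ be its contact graph, and let~$\opt = p(E^\star)$. For any single edge $e \in E^\star$, its two endpoints land in different classes with probability exactly~$1/2$, so $e \in E'$ with probability~$1/2$. By linearity of expectation, the expected total profit of the edges of~$E^\star$ that cross the partition equals~$\opt/2$. Because the restriction of the fixed optimum solution to these crossing edges is a feasible solution for the bipartite instance~$G'$, the optimum profit~$\opt'$ achievable on~$G'$ satisfies $\opt' \ge p(E^\star \cap E')$, and hence $\mathbb{E}[\opt'] \ge \opt/2$.

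Combining the two bounds finishes the argument. Writing~$\alg$ for the profit of the reported solution, Theorem~\ref{thm:approx-bipartite-weighted} guarantees $\alg \ge \tfrac{3}{16\alpha}\,\opt'$ for every realization of the random partition, so taking expectations and using $\mathbb{E}[\opt'] \ge \opt/2$ yields
\[
  \mathbb{E}[\alg] \;\ge\; \frac{3}{16\alpha}\,\mathbb{E}[\opt'] \;\ge\; \frac{3}{32\alpha}\,\opt,
\]
which is the claimed expected ratio of~$32\alpha/3$. I expect the only real subtlety to lie in the lower bound on $\mathbb{E}[\opt']$, namely in observing that the crossing edges of the fixed optimum are simultaneously realizable in~$G'$; this is immediate here because we reuse the very same box placement, so no genuinely hard analytic step arises. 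The randomization does all the work, and it is precisely derandomizing this step that will cost the extra factor in the deterministic variant of Theorem~\ref{thm:approx-gen-graph-weighted}.
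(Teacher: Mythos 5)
Your proof is correct and follows essentially the same route as the paper: a uniform random bipartition, applying the $16\alpha/3$-approximation of Theorem~\ref{thm:approx-bipartite-weighted} to the crossing edges, and lower-bounding $\mathbb{E}[\opt']$ by $\opt/2$ via linearity of expectation over the edges of a fixed optimum (you even make explicit the small step, implicit in the paper, that the crossing edges of the optimum remain feasible for~$G'$ since the same box placement realizes them). The only inaccuracy is your closing aside: the deterministic algorithm of Theorem~\ref{thm:approx-gen-graph-weighted} is not a derandomization of this scheme but a different construction (a \gap{} instance where every vertex is both a bin and an item, combined with a 1-tree decomposition and a five-star-forest bound), so the worse factor there has a different origin.
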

\begin{proof}
  Let~$G = (V,E)$ be the input graph and let~$\opt$ be the weight of a fixed optimum solution.
  Our algorithm works as follows. We first randomly partition the set of
  vertices into~$V_1$ and~$V_2 = V \setminus V_1$, that is, the
  probability that a vertex~$v$ is included in~$V_1$ is~$1/2$. Now we
  consider the bipartite graph~$G' = (V_1 \;\dot{\cup}\; V_2, E')$
  with~$E' = \left\{ (v_1,v_2) \in E \mid v_1 \in V_1
  \text{ and } v_2 \in V_2 \right\}$ that is induced by~$V_1$ and~$V_2$.
  By applying Theorem~\ref{thm:approx-bipartite-weighted} on~$G'$, we can find a
  feasible solution for~$G$ with weight~$\alg \ge 3\opt'/(16\alpha)$,
  where~$\opt'$ is the weight of an optimum solution for~$G'$.

  Any edge of the optimum solution is contained in~$G'$ with
  probability~$1/2$. Let~$\overline{\opt}$ be the total weight of
  the edges of the optimum solution that are present in~$G'$.
  Then,~$E[\overline{\opt}] = \opt/2$. Hence,
  \[ E[\alg] \;\ge\; 3 E[\opt']/(16\alpha) \;\ge\; 3
  E[\overline{\opt}]/(16\alpha) \;=\; 3 \opt/(32\alpha). \qedhere
  \] 
\end{proof}

\begin{theorem}
  \label{thm:approx-gen-graph-weighted}
  Weighted \crown on general graphs admits a $40\alpha/3(\approx21.1)$-ap\-prox\-imation.
\end{theorem}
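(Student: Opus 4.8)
The plan is to derandomize the random bipartition used in Theorem~\ref{thm:randomized-approx-gen-graph-weighted}. Recall that the randomized algorithm partitions $V$ into $V_1,V_2$, keeps only the crossing edges $E'$ to form a bipartite graph~$G'$, and then invokes the bipartite algorithm of Theorem~\ref{thm:approx-bipartite-weighted}, which returns a solution of profit $\alg \ge 3\,\opt(G')/(16\alpha)$, where $\opt(G')$ is the optimum profit for~$G'$. The only randomized ingredient is the choice of the bipartition: in expectation a random cut separates half of the profit of the fixed optimum contact graph $G^\star$, and since restricting the optimal representation to the crossing edges $E^\star\cap E'$ is itself a feasible (and bipartite) solution for~$G'$, we always have $\opt(G')\ge p(E^\star\cap E')$. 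Hence it suffices to construct \emph{deterministically} a bipartition $(V_1,V_2)$ that separates at least a $2/5$ fraction of the profit of $G^\star$, that is, with $p(E^\star\cap E')\ge \tfrac25\opt$. Plugging this into the bipartite guarantee gives
\[
  \alg \;\ge\; \frac{3}{16\alpha}\,\opt(G') \;\ge\; \frac{3}{16\alpha}\cdot\frac25\,\opt \;=\; \frac{3}{40\alpha}\,\opt,
\]
which is exactly the desired $40\alpha/3$-approximation.

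The heart of the argument is therefore the deterministic construction of such a cut, and here I would exploit that $G^\star$ is planar and hence $4$-colorable. If a proper $4$-coloring $C_1,\dots,C_4$ of $G^\star$ were available, then among the three \emph{balanced} groupings $\{C_1C_2\mid C_3C_4\}$, $\{C_1C_3\mid C_2C_4\}$, and $\{C_1C_4\mid C_2C_3\}$, every edge of $G^\star$ (which joins two distinct color classes) is separated by exactly two of the three, so the best of them already cuts a $\tfrac23$ fraction of $\opt$. The obstacle is that $G^\star$ is \emph{unknown}, so its color classes are not available to us. To get around this I would not try to cut $G^\star$ directly; instead I would precompute a polynomial-size family $\mathcal F$ of candidate bipartitions of~$V$, run the bipartite algorithm of Theorem~\ref{thm:approx-bipartite-weighted} on each, and output the most profitable of all resulting solutions. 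Because the final selection is made on the basis of the actually computed profit, it suffices to argue that \emph{some} member of $\mathcal F$ separates a $2/5$ fraction of the weight of any planar graph on~$V$; applying this with the hidden planar graph $G^\star$ then yields the bound through the displayed inequality.

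The main obstacle, and the step I expect to require the most care, is thus the design and analysis of the family $\mathcal F$: it must be constructible in polynomial time from $V$ alone (without reference to the hidden optimum) yet guarantee, for \emph{every} planar graph $H$ on the vertex set, that at least a $2/5$ fraction of the weight of $H$ is cut by some member. This is precisely where the gap between the ideal $2/3$ (attainable with a known coloring) and the weaker $2/5$ bound appears: lacking a coloring of $G^\star$, we must settle for a cruder, coloring-free guarantee, and the crux is to show that a fixed, input-independent family still preserves a constant fraction of the weight of every planar subgraph (richer families based on limited independence would average a half and could sharpen the constant, but the stated ratio follows already from the $2/5$ guarantee). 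Once such a family is in hand, the remaining steps---feasibility of the restriction of the optimum to crossing edges, and the arithmetic combining the $3/(16\alpha)$ bipartite guarantee with the $2/5$ cut fraction---are routine, exactly as in Theorems~\ref{thm:approx-bipartite-weighted} and~\ref{thm:randomized-approx-gen-graph-weighted}.
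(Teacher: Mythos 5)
Your derandomization framework is set up correctly: running the bipartite algorithm of Theorem~\ref{thm:approx-bipartite-weighted} on every member of a polynomial-size family $\mathcal{F}$ of bipartitions and keeping the best output is a legitimate scheme, because the analysis only needs the \emph{existence} of one good member (it never has to evaluate anything involving the hidden graph $G^\star$), and the arithmetic $\frac{3}{16\alpha}\cdot\frac{2}{5}=\frac{3}{40\alpha}$ is right. But the heart of the proof---an actual construction of a polynomial-time-computable family $\mathcal{F}$ such that, for every weighted planar graph on $V$, some member cuts at least a $2/5$ fraction of its weight---is never given; you explicitly defer it as ``the main obstacle'' and ``the crux.'' The constant $2/5$ is reverse-engineered from the target ratio rather than derived from any construction, and the $4$-coloring idea cannot be executed because, as you note yourself, the color classes of $G^\star$ are unavailable. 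As written, the proposal establishes the theorem only modulo an unproved combinatorial lemma, so it is not a proof.

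The remark you relegate to a parenthesis is in fact the standard way to close this gap: a pairwise-independent family of $0/1$ labelings of $V$ (e.g., from a linear hash family over $\mathrm{GF}(2)$, of size polynomial in $n$) separates each fixed pair of distinct vertices with probability exactly $1/2$ over the random choice of a member, so by linearity of expectation some member cuts at least half the weight of \emph{any} weighted graph---planarity is irrelevant here. Carrying out that one step would not only complete your argument but improve it to a deterministic $32\alpha/3$-approximation, matching the randomized bound of Theorem~\ref{thm:randomized-approx-gen-graph-weighted} and beating the stated $40\alpha/3$. For comparison, the paper's own proof takes an entirely different route with no derandomization: it builds a single \gap instance in which every vertex is simultaneously eight bins and an item, uses the decomposition of the planar optimum into five star forests to get $\opt_{\GAP}\ge\opt/5$, and then turns the approximate \gap solution into a contact representation by splitting each 1-tree component of the item-to-bin graph into two subgraphs realizable with point contacts (losing a factor of $2$) and applying the point-contact removal of Theorem~\ref{thm:approx-bipartite-weighted} (keeping a factor of $3/4$), which yields $\alg\ge 3\opt/(40\alpha)$.
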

\begin{proofWithWrapfig}
Let~$G=(V,E)$ be the input graph.  As in the proof of
Theorem~\ref{thm:approx-bipartite-weighted},
our algorithm constructs an instance of \gap based on~$G$. The
difference is that, \emph{for every vertex} $v\in V$, we create
\emph{both eight bins and an item}~$i(v)$.  Capacities and sizes
remain as before.  The profit of placing item~$i(v)$ in a bin
of vertex~$u$, with $u \ne v$, is~$p(u,v)$.

Let~$\opt$ be the value of an optimum solution of \crown in~$G$, and
let $\opt_{\GAP}$ be the value of an optimum solution for the constructed
instance of \gap. Since any optimum solution of \crown, being a
planar graph, can be decomposed into five star forests~\cite{hakimi96},
there exists a star forest carrying at least~$\opt/5$ of the total
profit. Such a star forest corresponds to a solution
of \gap for the constructed instance; therefore,~$\opt_{\GAP} \ge \opt/5$.
Now we compute an $\alpha$-approximation for the \gap instance, which
results in a solution of total profit
$\alg_{\GAP} \ge \opt_{\GAP}/\alpha \ge \opt/(5\alpha)$.
Next, we show how our 
solution induces a feasible solution of \crown where every
vertex~$v\in V$ is either a bin or an item.

\begin{wrapfigure}{r}{4.5cm} 
  \centering
  \includegraphics{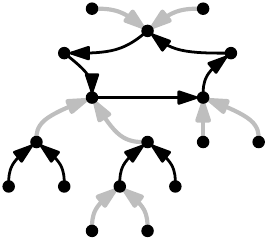}
  \caption{Partitioning a 1-tree into a star forest (gray) and the
    union of a cycle and a star forest (black)}
  \label{fig:1tree}
\end{wrapfigure}

Consider the directed graph~$G_{\GAP}=(V,E_{\GAP})$ with~$(u,v)\in
E_{\GAP}$ if and only if the item corresponding to~$u\in V$ is
placed into a bin corresponding to~$v\in V$.  A connected component
in~$G_{\GAP}$ with~$n'$
vertices has at most~$n'$ edges since every item can be
placed into at most one bin. If~$n' = 2$, we arbitrarily make one
of the vertices a bin and the other an item.
If~$n' > 2$, the connected component is a 1-tree, that is, a tree
and an edge. In this case, we partition the edges into two subgraphs;
a star forest and the disjoint union of a star forest and a cycle; see
Fig.~\ref{fig:1tree}.
Note that both subgraphs can be represented by touching boxes if we allow
point contacts.  This is due to the fact that the stars correspond to
a solution of~\prob{GAP}.
Hence, choosing a subgraph with larger weight and post-processing the
solution as in the proof of Theorem~\ref{thm:approx-bipartite-weighted}
results in a feasible solution of~\crown with no point contacts.
Initially, we discarded at most half of the weight and the post-processing
keeps at least~$3/4$ of the weight, so $\alg \ge 3\alg_{\GAP}/8$.
Therefore,~$\alg \ge 3\opt/(40\alpha)$.
\end{proofWithWrapfig}

\section{The Unweighted Case}
\label{sub:unweighted}

In this section, we consider the unweighted \crown problem, that is,
all desired contacts have profit~$1$.
Thus, we want to maximize the number of edges of the input graph
realized by the contact representation. We present approximation
algorithms for different graph classes.
First, we give a 2-approximation for trees.  Then, we present
a PTAS for planar graphs of bounded degree.
Finally, we provide a $(5 + 16\alpha/3)$-approximation for general graphs.

\begin{theorem}
  \label{thm:2-approx-tree-unweighted}
  Unweighted \crown on trees admits a 2-approximation.
\end{theorem}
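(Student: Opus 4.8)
The plan is to reuse the decomposition strategy behind the weighted bound of Theorem~\ref{thm:previous-improved}, but to gain the missing additive~$\eps$ by solving the subproblems \emph{exactly} rather than only up to a PTAS. Concretely, I would first decompose the input tree into two star forests~$F_1,F_2$ (every tree has star arboricity~$2$~\cite{hakimi96}; an explicit decomposition is obtained by rooting~$T$ and sending each edge to~$F_1$ or~$F_2$ according to the parity of the depth of its lower endpoint). I would then solve unweighted \crown optimally on each~$F_i$ and, invoking Lemma~\ref{lem:partitioned-approximation} with $\alpha_1=\alpha_2=1$, report the better of the two solutions. Since $\opt\le\opt_1+\opt_2$ and the algorithm returns $\max(\opt_1,\opt_2)\ge(\opt_1+\opt_2)/2$, this yields exactly a factor~$2$; the weighted argument loses the extra~$\eps$ precisely because there the stars are only solved to within a factor of~$1+\eps$.

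The crux is therefore a polynomial-time \emph{exact} algorithm for unweighted \crown on a single star (the stars of a star forest being independent). Here the profit of every edge is~$1$, so the task is to place as many leaf boxes as possible in contact with the center box~$B_u$. Using the eight-bin model of Theorem~\ref{thm:previous-improved} --- four corner bins of capacity~$1$, two horizontal bins of capacity~$w(u)$, and two vertical bins of capacity~$h(u)$ --- together with the shrink-by-$1/2$ reduction that removes point contacts, this becomes the maximum-cardinality version of \gap with a \emph{constant} number of bins and unit profits. I would prove that this cardinality problem is polynomially solvable even though the weighted and many-bin versions are hard: the objective is integral and bounded by the degree, and I would exploit an exchange argument showing that some optimal solution uses an inclusion-minimal (``smallest-first'') set of leaves, so that only polynomially many candidate leaf sets need to be examined, each testable for feasibility by a greedy packing of the bins.

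I expect this exchange argument to be the main obstacle, and the difficulty is twofold. First, each leaf~$v$ occupies width~$w(v)$ on a horizontal side but height~$h(v)$ on a vertical side, so the leaves are not totally ordered by a single size parameter, and a naive swap of a ``larger'' for a ``smaller'' leaf need not preserve feasibility on the side where it was placed. Second, having two horizontal (resp.\ vertical) bins of equal capacity makes exact packing partition-like in the weighted setting, so the argument must genuinely use unit profits. I would handle this by separating the leaves into those that are large relative to~$w(u)$ or~$h(u)$ (of which only a constant number can ever be placed, so they can be enumerated) and those that are small (which pack greedily with essentially no wasted capacity), and by fixing, for each enumerated partial assignment, the side-type of every remaining leaf before applying the smallest-first exchange within each bin type. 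Establishing that these two regimes combine into a provably optimal, polynomial-time packing is the step that requires the most care.
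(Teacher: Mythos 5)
Your plan stands or falls with its crux, and the crux is false: unweighted \crown on a single star --- equivalently, the maximum-cardinality version of the eight-bin \gap instance you describe --- is itself (weakly) NP-hard, so the polynomial-time \emph{exact} star algorithm you need cannot exist unless P $=$ NP. The obstruction is exactly the \prob{Partition} structure you flag as a ``difficulty'': take a center $u$ with $w(u)=B$, leaves of widths $a_1,\dots,a_n$ with $\sum_i a_i = 2B$ and heights exceeding $h(u)$ (so the two vertical bins are unusable), and four dummy leaves that are both too wide and too tall for any side (so they can occupy only the four corner bins). Then the optimum equals $n+4$ if and only if $a_1,\dots,a_n$ can be split across the two horizontal bins of capacity $B$ each, i.e., if and only if the \prob{Partition} instance is a yes-instance; hence even deciding the value of the unweighted optimum is NP-hard, and since box dimensions are arbitrary, pseudo-polynomial methods do not help. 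This also shows why your exchange argument cannot be completed: a smallest-first exchange (valid in one dimension by a monotone injection) only identifies \emph{which} $k$ leaves to try --- the $k$ smallest --- but the remaining task of \emph{verifying} that $k$ given items fit into two equal-capacity bins is precisely \prob{Partition}, which no greedy packing decides. The large/small dichotomy you propose is the standard route to a PTAS (essentially Lemma~\ref{lem:gap-ptas}) and inherently loses a $(1+\eps)$ factor on each star; feeding that back into Lemma~\ref{lem:partitioned-approximation} with two star forests returns you to the $(2+\eps)$ bound of Theorem~\ref{thm:previous-improved}, which is exactly what you set out to improve.

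The paper's proof avoids exact star algorithms altogether and gets the factor $2$ from additive slack rather than multiplicative exactness. It decomposes the tree into \emph{edge-disjoint} stars, each formed by a vertex $u$, its leaf children $v_1,\dots,v_k$, and the \emph{anchor edge} to $u$'s parent; it deletes the anchor edges and runs the $(1+\eps)$-approximation of Theorem~\ref{thm:previous-improved} on the resulting star forest. The per-star analysis uses the fact that a star can always realize $\min(k,4)$ contacts: for $k\le 4$ the algorithm realizes all $k$ edges while the optimum can realize at most $k+1$ (the extra one being the anchor), giving ratio $(k+1)/k\le 2$; for $k\ge 5$ one has $\alg\ge 4$ and the optimum is at most $(1+\eps)\alg+1$, giving ratio at most $(1+\eps)+1/4<2$. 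Summing over the edge-disjoint stars yields the theorem. In other words, the guaranteed four contacts per star absorb both the lost anchor edge and the $\eps$ of the PTAS; your scheme, which takes the better of two forests, has no analogous slack, which is why it is forced into the (impossible) requirement of solving stars exactly.
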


\begin{proof}
   Let $T$ be the input tree.
  We first decompose $T$ into edge-disjoint stars as follows.
  If $T$ has at most two vertices, then the decomposition is straight-forward.
  So, we assume \WLOG that $T$ has at least three vertices and is rooted at a non-leaf vertex.
  Let~$u$ be a vertex of~$T$ such that all its children, say
  $v_1,\dots,v_ k$, are leaf vertices. 
  If~$u$ is the root of~$T$, then the decomposition contains only one
  star centered at $u$. 
  Otherwise, denote by~$\pi$ the parent of~$u$ in~$T$, create a
  star~$S_u$ centered at~$u$ with edges
  $(u,\pi),(u,v_1),\dots,(u,v_k)$ and call the edge $(u,\pi)$ of~$S_u$
  the \emph{anchor edge} of~$S_u$.
  The removal of $u,v_1,\ldots,v_k$ from $T$ results in a new tree.
  Therefore, we can recursively apply the same procedure.
  The result is a decomposition of~$T$ into edge-disjoint stars
  covering all edges of~$T$.

  We next remove, for each star, its anchor edge from~$T$.
  We apply the PTAS of Theorem~\ref{thm:previous-improved}
  to the resulting star forest and claim that the result
  is a $2$-approximation for~$T$.
  To prove the claim, consider a star $S_u'$ of the new star forest,
	centered at $u$ with edges $(u,v_1),\dots,(u,v_k)$
  and let $\alg$ be the total number of contacts realized by
	the $(1 + \eps)$-approximation algorithm on $S_u'$.
  We consider the following two cases.
  \begin{enumerate}[label=(\alph*), noitemsep, topsep=0pt, parsep=0pt, partopsep=0pt]
    \item $1\le k\le 4$:
    Since it is always possible to realize four contacts of a star,
    $\alg \ge k$.
    Note that an optimal solution may realize at most~$k+1$ contacts
    (due to the absence of the anchor edge from~$S_u'$).
    Hence, our algorithm has approximation ratio~$(k+1)/k \le 2$.
    \item $k \ge 5$:
    Since it is always possible to realize four contacts of a star,
    we have $\alg \ge 4$.
    On the other hand, an optimal solution realizes at
    most~$(1+\eps)\alg+1$ contacts.
    Thus, the approximation ratio is
    $((1+\eps)\alg+1)/\alg \le (1+\eps) + 1/4 < 2$.
  \end{enumerate}
  The theorem follows from the fact that all edges of~$T$ are incident to
  the star centers.
\end{proof}

Next, we develop a PTAS for bounded-degree planar graphs.
Our construction needs two lemmas, the first of which
was shown by Barth et al.~\cite{bfklnopsuw-swcrh-LATIN14}.
\begin{lemma}[\cite{bfklnopsuw-swcrh-LATIN14}]
  \label{lem:lower-bound-opt-deg}
  If the input graph $G=(V,E)$ has maximum degree $\Delta$ then\\
  $\opt\geq 2|E|/(\Delta+1)$.
\end{lemma}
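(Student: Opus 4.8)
The plan is to exhibit a single realizable subgraph of $G$ carrying at least $2|E|/(\Delta+1)$ edges; since $\opt$ counts the maximum number of realizable contacts, the stated lower bound then follows immediately. The engine is a proper edge colouring combined with the observation that any subgraph of maximum degree two can be realized in full.

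First I would apply Vizing's theorem to properly colour the edges of~$G$ with $\Delta+1$ colours, thereby partitioning~$E$ into matchings $M_1,\dots,M_{\Delta+1}$. Let $M_i$ and $M_j$ be the two largest of these matchings. A short averaging argument shows $|M_i|+|M_j|\ge 2|E|/(\Delta+1)$: writing the sizes in non-increasing order $m_1\ge\cdots\ge m_{\Delta+1}$, the remaining $\Delta-1$ classes each contribute at most~$m_2$, so $(\Delta-1)m_2\ge m_3+\cdots+m_{\Delta+1}=|E|-(m_1+m_2)$, which rearranges, using $2m_2\le m_1+m_2$, into $m_1+m_2\ge 2|E|/(\Delta+1)$. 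I then set $M=M_i\cup M_j$.

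Next I would argue that $M$ is fully realizable. Since every vertex lies in at most one edge of each matching, the subgraph $(V,M)$ has maximum degree two and hence is a disjoint union of simple paths and cycles. Each such component can be realized with all of its edges as genuine (non-degenerate) contacts: a path is drawn as a horizontal row of its boxes, where consecutive boxes share a vertical segment whose length is the smaller of the two heights and no two non-consecutive boxes meet; a cycle is realized via the exact algorithm for cycles of Barth et al.~\cite{bfklnopsuw-swcrh-LATIN14}. Distinct components are placed in far-apart regions of the plane, so that their boxes are pairwise interior-disjoint and no spurious contacts arise. Consequently $(V,M)$ is realized in full, whence $\opt\ge|M|\ge 2|E|/(\Delta+1)$.

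The only delicate point is the full realizability of the cycle components: for boxes of arbitrary fixed dimensions one must check that the loop of contacts can actually be closed, which I would settle by invoking the exact cycle algorithm already established rather than reproving it. Everything else---the edge colouring, the averaging inequality, and the path construction---is routine.
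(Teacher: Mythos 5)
Your proof is correct. One thing to be aware of: this paper does not actually prove Lemma~\ref{lem:lower-bound-opt-deg} at all --- it is imported verbatim from Barth et al.~\cite{bfklnopsuw-swcrh-LATIN14} --- so the comparison here is with the argument in the cited source rather than with an in-paper proof. Your route (Vizing's theorem, keep the two heaviest colour classes, realize the resulting degree-$\le 2$ subgraph completely) is essentially the standard one, and your averaging step is right: from $(\Delta-1)m_2 \ge |E|-(m_1+m_2)$ and $m_2 \le (m_1+m_2)/2$ one gets $|E| \le (m_1+m_2)(\Delta+1)/2$, hence $m_1+m_2 \ge 2|E|/(\Delta+1)$, with no parity loss (pairing up all $\Delta+1$ colour classes into $\lceil(\Delta+1)/2\rceil$ linear forests would only give $2|E|/(\Delta+2)$ when $\Delta$ is even, so taking just the top two classes is the right move). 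The realizability fact you lean on is exactly the one this paper itself uses in the proof of Theorem~\ref{thm:approx-gen-graph-unweighted}, where the union of two matchings $M\cup M''$ --- a set of vertex-disjoint paths and cycles --- is said to be completely realizable, citing the same source. One small imprecision: you justify the cycle components by ``the exact algorithm for cycles,'' but exactness of an optimization algorithm does not by itself imply that \emph{all} edges of a cycle can be realized; the fact you need is the stronger statement (also due to Barth et al.) that every cycle of boxes admits a representation realizing every edge, e.g.\ via a pinwheel-type arrangement. Since that is precisely what makes their cycle algorithm exact, the appeal is legitimate, but it is worth stating the fact in that form.
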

The second lemma provides an exponential-time exact algorithm for \crown.
\begin{lemma}\label{lem:exact-algorithm}
  There is an exact algorithm for unweighted \crown with running time
  $2^{O(n\log n)}$.
\end{lemma}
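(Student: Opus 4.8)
The plan is to replace the continuum of possible placements by a finite set of \emph{combinatorial types} and to test each type by linear programming. A representation is determined by the lower-left corners $(x_v,y_v)_{v\in V}$, and since every width and height is fixed, the whole arrangement is encoded by the $2n$ horizontal endpoints $\{x_v,\,x_v+w(v)\}_{v\in V}$ and the $2n$ vertical endpoints $\{y_v,\,y_v+h(v)\}_{v\in V}$. The central observation is that whether two boxes are interior-disjoint, and whether they realize a positive-length contact, depends only on the relative order of these endpoints, ties included: the open $x$-projections of $B_u$ and $B_v$ overlap iff $x_u<x_v+w(v)$ and $x_v<x_u+w(u)$ (and symmetrically in $y$), so the interiors meet iff both open projections overlap; and, e.g., $u$ touches $v$ on its right side along a segment of positive length iff $x_u+w(u)=x_v$ while the open $y$-projections overlap, the three other contact types being symmetric. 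I would therefore define a combinatorial type as a pair $(\prec_x,\prec_y)$ of weak orders (linear orders with ties) on the $2n$ horizontal and the $2n$ vertical endpoints; by the above, both feasibility and the set of realized edges of any placement are functions of its type alone.

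The algorithm enumerates all types. A weak order on $m$ elements is given by assigning each element a rank in $\{1,\dots,m\}$, so there are at most $(2n)^{2n}$ weak orders per coordinate, hence $2^{O(n\log n)}$ types overall, matching the claimed bound. For each type I read off, purely combinatorially and in $\mathrm{poly}(n)$ time, which desired edges are realized as positive-length contacts and count them. It remains to discard types that no placement attains. Treating the $4n$ endpoints as variables, the fixed dimensions impose the equalities $(x_v+w(v))-x_v=w(v)$ and $(y_v+h(v))-y_v=h(v)$, the order relations of $(\prec_x,\prec_y)$ impose $\le$ and $=$ constraints, and every pair whose open projections the type declares to overlap---in particular the perpendicular overlaps witnessing the counted contacts---imposes a strict inequality. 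I would test realizability by replacing each strict constraint $a<b$ with $a+\delta\le b$ and maximizing the single slack $\delta$; the type is realizable exactly when this linear program has a positive optimum, whose maximizer supplies explicit coordinates. Outputting the placement of a realizable type with the largest count gives an optimal solution.

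Since there are $2^{O(n\log n)}$ types, each processed by one polynomial-time linear program, the total running time is $2^{O(n\log n)}$. I expect the reduction itself---the proof that the pair of weak orders determines feasibility and the realized-edge set exactly---to be the main point requiring care, together with the bookkeeping that separates the non-strict disjointness constraints from the strict overlap constraints; the slack-maximizing linear program makes the latter routine, so no genuine obstacle beyond careful case analysis should remain.
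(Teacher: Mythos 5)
Your proposal is correct and follows essentially the same route as the paper: enumerate the $2^{O(n\log n)}$ pairs of weak orders (linear orders with ties) on the $2n$ horizontal and $2n$ vertical box-side coordinates, read off overlaps and positive-length contacts combinatorially from the order pair, and test geometric realizability of each candidate type with a polynomial-size linear program. Your slack-maximization trick for the strict inequalities is a detail the paper leaves implicit, but it is the standard way to implement the LP test the paper invokes.
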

\begin{proof}
  Consider a placement which assigns a position~$[\ell_B,r_B]\times [b_B,t_B]$
  to every box $B$, with~$\ell_B+w(B)=r_B$ and~$b_B+h(B)=t_B$.
  For the $x$-axis, this gives a (possibly nonstrict) linear order on the
  values~$\ell_B$ and~$r_B$, where some might be equal. An order on the $y$-axis is implied
  similarly.  Together, these two orders fully determine the
  combinatorial structure of overlaps and contacts: for contact,
  two boxes must have a side of equal value and a side with overlap.
  
  The algorithm
  enumerates all possible combinations of two such orders using
  the representation sketched above. On a single axis, this is a permutation of $2n$ variables and,
  between every two variables adjacent in this permutation, whether
  they are equal or the second variable has strictly larger value.
  This representation demonstrates that the number of distinct orders in one dimension is bounded
  by~$O((2n)!\cdot2^{2n})$, which is $2^{O(n\log n)}$.
	The number of combinations of two such orders also satisfies this bound.
  
  For any  given pair of orders, it can be determined
  if they imply overlaps  
  and what the objective value is: count the number of profitable contacts.
  If there are no overlaps, the existence of an actual placement
  realizing the orders is tested using linear programming. As these
  tests run in polynomial time, an optimal placement can be found
  in~$2^{O(n\log n)}$ time.
\end{proof}

\begin{theorem}\label{thm:ptas-bounded-deg-planar}
  Unweighted \crown on planar graphs with maximum degree~$\Delta$
  admits a PTAS.  More specifically, for
  any~$\eps>0$ there is an~$(1+\eps)$-approximation
  algorithm with linear running time~$n2^{(\Delta/\eps)^{O(1)}}$.
\end{theorem}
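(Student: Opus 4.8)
The plan is to use a planar separator decomposition to break $G$ into pieces small enough to be solved exactly by Lemma~\ref{lem:exact-algorithm}, while losing only an $\eps$-fraction of the optimum at the piece boundaries. I would first reduce to the case that $G$ is connected: otherwise I handle each connected component separately and place the resulting box arrangements far apart in the plane, so that no spurious contacts arise between components (isolated vertices contribute nothing and are discarded). For a connected planar graph this reduction guarantees $|E| \ge n-1$, hence $n \le 2|E|$, which will be essential for the loss analysis.

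Next I would invoke Frederickson's $r$-division~\cite{frederickson87shortestpaths-planar} with a parameter $r=\Theta(\Delta^4/\eps^2)$ to be fixed below. This partitions $V$ into $O(n/r)$ regions, each containing at most $r$ vertices, such that the set $S$ of boundary vertices (those shared between regions) satisfies $|S|=O(n/\sqrt{r})$, and every edge of $G$ either lies inside a single region's interior or is incident to a vertex of~$S$. Deleting $S$ together with its incident edges therefore leaves a disjoint union of region-interiors, each on at most $r$ vertices. I would solve unweighted \crown on each region-interior exactly using Lemma~\ref{lem:exact-algorithm}, in time $2^{O(r\log r)}$ per region, and then place the $O(n/r)$ resulting arrangements far apart in the plane to obtain a single feasible representation of~$G$. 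The running time is dominated by solving the pieces, giving $O(n/r)\cdot 2^{O(r\log r)} = n\,2^{(\Delta/\eps)^{O(1)}}$ (the division itself is computable in near-linear time and is absorbed into this term).

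For the approximation guarantee I would argue on two sides. Restricting a fixed global optimum to the boxes of a single region $R_i$ yields an interior-disjoint placement that realizes every optimum-edge internal to $R_i$; summing over regions and solving each exactly gives $\alg = \sum_i \opt(R_i) \ge \opt - (\text{optimum edges incident to } S)$. Since each vertex of $S$ has degree at most $\Delta$, at most $\Delta|S| = O(\Delta n/\sqrt{r})$ edges are incident to $S$. Combining $n \le 2|E|$ with Lemma~\ref{lem:lower-bound-opt-deg}, namely $\opt \ge 2|E|/(\Delta+1)$, the discarded weight is $O\bigl(\Delta(\Delta+1)\opt/\sqrt{r}\bigr)$, which is at most $\eps\opt$ once $\sqrt{r} = \Omega(\Delta^2/\eps)$, i.e. for $r=\Theta(\Delta^4/\eps^2)$. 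Hence $\alg \ge (1-\eps)\opt$, which after rescaling $\eps$ is the claimed $(1+\eps)$-approximation, and the exponent stays $(\Delta/\eps)^{O(1)}$ as required.

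The step I expect to be the main obstacle is the loss analysis at the boundary. One must (i) verify that restricting the global optimum placement to a single region is genuinely a feasible sub-representation, so that the per-region exact solutions dominate it and the decomposition loses \emph{only} the edges incident to $S$; and (ii) convert the purely combinatorial bound $\Delta|S|$ on the discarded edges into a bound relative to $\opt$. Part~(ii) is exactly where Lemma~\ref{lem:lower-bound-opt-deg} and the connectivity reduction enter, and calibrating $r$ against $\Delta$ and $\eps$ so that $\Delta|S| \le \eps\opt$ while keeping the piece size $(\Delta/\eps)^{O(1)}$ is the delicate quantitative point.
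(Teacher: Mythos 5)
Your proposal is correct and follows essentially the same route as the paper: Frederickson's $r$-division with $r=\Theta(\Delta^4/\eps^2)$, the exact algorithm of Lemma~\ref{lem:exact-algorithm} on each piece, and the connectivity reduction combined with Lemma~\ref{lem:lower-bound-opt-deg} to charge the $O(\Delta n/\sqrt{r})$ boundary edges against $\opt$. The only differences are cosmetic (you phrase the guarantee as $\alg\ge(1-\eps)\opt$ and rescale, while the paper bounds $\opt/\opt'$ directly), so there is nothing to fix.
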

\begin{proof}
  Let~$r$ be a parameter to be determined later.
  Frederickson~\cite{frederickson87shortestpaths-planar} showed that
  we can find a vertex set~$X\subseteq V$ (called \emph{$r$-division}) of
  size~$O(n/\sqrt{r})$ such that the following holds. The vertex
  set~$V \setminus X$ can be partitioned into~$n/r$ vertex
  sets~$V_1,\dots,V_{n/r}$ such that (i)~$|V_i|\leq r$ for
  $i=1,\dots,n/r$ and (ii)~there is no edge running between any two
  distinct vertex sets~$V_i$ and~$V_j$.
  In what follows, we assume w.l.o.g.\ that~$G$ is connected, as we can
  apply the PTAS to every connected component separately.

  We apply the result of Frederickson to the input graph and compute an
  $r$-division~$X$.  By removing the vertex set $X$ from the graph, we
  remove $O(n\Delta/\sqrt{r})$ edges from $G$.  Now, we apply
  the exact algorithm of Lemma~\ref{lem:exact-algorithm} to each of
  the induced subgraphs $G[V_i]$ separately.  The solution is the
  union of the optimum solutions to $G[V_i]$.

  Since no edge runs between the distinct sets~$V_i$ and~$V_j$, the subgraphs
  $G[V_i]$ cover $G-X$.  Let $E^\star$ be the set of edges realized by an
  optimum solution to~$G$, let $\opt=|E^\star|$, and let
  $\opt'=|E^\star \cap E(G-X)|$.  By
  Lemma~\ref{lem:lower-bound-opt-deg}, we have that $\opt\geq
  2(n-1)/(\Delta+1)=\Omega(n/\Delta)$.  When we removed $X$ from $G$,
  we removed $O(n\Delta/\sqrt{r})$ edges.  Hence,
  $\opt=\opt'+O(n\Delta/\sqrt{r})$ and
  $\opt'=\Omega(n(1/\Delta-\Delta/\sqrt{r}))$.

  Since we solved each sub-instance $G[V_i]$ optimally and since these
  sub-instances cover $G-X$, the solution created by our algorithm
  realizes at least $\opt'$ many edges.  Using this fact and the above
  bounds on $\opt$ and $\opt'$, the total performance of our algorithm
  can be bounded by
  \begin{displaymath}
    \frac{\opt}{\opt'} \;=\; \frac{\opt'+O(n\Delta/\sqrt{r})}{\opt'} \;=\;
    1+O\left(\frac{n\Delta/\sqrt{r}}{n(1/\Delta-\Delta/\sqrt{r})}\right)
    \;=\; 1+O\left(\frac{\Delta^2}{\sqrt{r}-\Delta^2}\right)\,.
  \end{displaymath}
  We want this last term to be smaller than $1+\eps$ for some
  prescribed error parameter $0<\eps\leq 1$.  It is not hard to
  verify that this can be achieved by letting
  $r=\Theta(\Delta^4/\eps^2)$.  Since each of the subgraphs
  $G[V_i]$ has at most $r$ vertices, the total running time for
  determining the solution is $n2^{(\Delta/\eps)^{O(1)}}$.
\end{proof}

Before tackling the case of general graphs, we need a lower bound on
the size of maximum matchings in planar graphs in terms of the numbers
of vertices and edges.

\begin{lemma}
  \label{lem:matching-dense-planar}
  Any planar graph with $n$ vertices and $m$ edges
  contains a matching of size at least $(m-2n)/3$.
\end{lemma}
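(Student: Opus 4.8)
The plan is to prove the equivalent inequality $m \le 2n + 3\nu$, where $\nu$ denotes the size of a maximum matching; rearranging gives $\nu \ge (m-2n)/3$ at once. I would argue by induction on $n$, treating each connected component separately (as in the other proofs of this paper). The inductive step splits into two cases according to the minimum degree.

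First, if $G$ has a vertex $v$ of degree at most $2$, I would delete it. Then $G-v$ is planar with $n-1$ vertices and at least $m-2$ edges, so by induction $\nu(G)\ge\nu(G-v)\ge \bigl((m-2)-2(n-1)\bigr)/3=(m-2n)/3$; deleting a vertex of degree at most $2$ is exactly ``affordable'' against the $2n$ term. Iterating this reduction, it suffices to handle graphs of minimum degree at least $3$: if the reduction never gets stuck, then $G$ is $2$-degenerate, hence $m\le 2n-3$, and the claim is vacuous because its right-hand side is negative.

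Second, for the min-degree-$\ge 3$ case I would invoke a lower bound on the matching number of planar graphs with minimum degree at least three, namely $\nu\ge (n+2)/3$ (Nishizeki and Baybars). Together with the planar edge bound $m\le 3n-6$ this yields $(m-2n)/3\le (n-6)/3<(n+2)/3\le\nu$, which finishes the proof. The finitely many small configurations below the threshold of that bound (e.g.\ components isomorphic to $K_4$) can be checked directly.

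The main obstacle is precisely this second case. The naive route --- take a maximum matching $M$, observe that its exposed set $U$ is independent, bound the edges inside $V(M)$ by the planar estimate $6\nu-6$ and the edges from $U$ to $V(M)$ by the bipartite-planar estimate $2n-4$ --- only gives $m\le 2n+6\nu-10$, i.e.\ the far weaker $\nu\ge(m-2n)/6$. The factor-two loss comes from charging the \emph{full} planar budget to the edges inside $V(M)$ and the \emph{full} bipartite-planar budget to the edges meeting $U$ at the same time, which planarity in fact forbids. Recovering this missing factor of two is the crux, and I expect it genuinely requires matching-theoretic input beyond edge counting: either the cited min-degree bound, or a Tutte--Berge / Gallai--Edmonds analysis exploiting that after the reduction each exposed vertex sends at least three edges into the barrier and that the factor-critical components attach to it planarly. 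A purely Eulerian or discharging count on the whole graph stalls as well, since it reduces the statement to bounding the number of triangular faces by $6\nu+O(1)$, and the only obvious way to do that runs back through the lossy estimate $|E(V(M))|\le 6\nu-6$.
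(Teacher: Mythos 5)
Your proposal is correct and follows essentially the same route as the paper's own proof: the same induction that deletes a vertex of degree at most two (charging it against the $2n$ term), the same appeal to the Nishizeki--Baybars bound $\nu \ge \lceil (n+2)/3 \rceil$ for connected planar graphs of minimum degree three combined with $m \le 3n-6$, and a separate treatment of the small cases below that theorem's threshold. The only cosmetic difference is in those small cases: where you propose checking the finitely many configurations directly, the paper observes that for $n \le 9$ planarity gives $(m-2n)/3 \le (n-6)/3 \le 1$, so any single edge already constitutes a large enough matching.
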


\begin{proof}
  Let $G$ be a planar graph.  Our proof is by induction on~$n$.  The
  claim clearly holds for~$n=1$.

  For the inductive step assume that $n>1$.  If $G$ is not connected,
  the claim follows by applying the inductive hypothesis to every
  connected component.  Now assume that $G$ has a vertex $u$ of degree
  less than~3.  Consider the graph $G'=G-u$ with $n'=n-1$ vertices and
  $m'\geq m-2$ edges.  By the inductive hypothesis $G'$ (and hence,
  $G$, too) has a matching of size at least
  \begin{displaymath}
		(m'-2n')/3 \ge ((m-2)-2(n-1))/3=(m-2n)/3.
  \end{displaymath}

  It remains to tackle the case where $G$ is connected and has minimum
  degree~3.  Nishizeki and Baybars~\cite{nishizeki79matchings-planar}
  showed that any connected planar graph with at least $n\geq 10$ vertices
  and minimum degree~3 has a matching of size at least
  $\lceil(n+2)/3\rceil\geq n/3$.  This shows the claim for $n\geq 10$
  since $m \le 3n-6$.
  
  In the remaining cases, $G$ has $n \le 9$ vertices. Due to planarity, we have
  $(m-2n)/3 \le (n-6)/3 \le 1$. Hence, any nonempty matching is large enough.
\end{proof}

We are now ready to present an approximation algorithm for general graphs.

\begin{theorem}
  \label{thm:approx-gen-graph-unweighted}
  Unweighted \crown on general graphs admits a $(5+16\alpha/3)(\approx13.4)$\--ap\-prox\-imation.
\end{theorem}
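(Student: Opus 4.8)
The plan is to run two algorithms and return the better solution. The first, $\alg_A$, computes a maximum matching in $G$ and realizes it directly; since a matching asks for only one contact per vertex, it is always realizable (place each matched pair as an isolated domino), so $\alg_A=\mu(G)$, the matching number of $G$. The second, $\alg_B$, approximates a maximum-size \emph{realizable star forest} of $G$ (via the \gap/star machinery behind Theorems~\ref{thm:previous-improved} and~\ref{thm:approx-bipartite-weighted}) and realizes it without point contacts, losing at most a factor $4/3$ in the point-contact removal. I aim for the guarantee $\alg_B\ge \tfrac{3}{4\alpha}\,\sigma$, where $\sigma$ is the number of edges of a densest star forest contained in a fixed optimum contact graph $G^\star$; note such a star forest is realizable because it is a subgraph of the realizable $G^\star$.

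The heart of the argument is a structural inequality bounding the optimum by these two quantities,
\[
  \opt \;\le\; 5\,\mu(G^\star) + 4\,\sigma .
\]
To prove it I would restrict attention to the non-isolated part of $G^\star$, which is planar with $\opt$ edges; write $n^\star$ for its number of vertices and $c$ for its number of components. Lemma~\ref{lem:matching-dense-planar} gives $\mu(G^\star)\ge(\opt-2n^\star)/3$, i.e.\ $\opt\le 3\mu(G^\star)+2n^\star$. Taking a spanning tree of each component and splitting it into two star forests (a tree is covered by two star forests) yields a star forest of $G^\star$ with at least $(n^\star-c)/2$ edges, so $n^\star\le 2\sigma+c$. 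Finally, one edge per component forms a matching, so $c\le\mu(G^\star)$. Substituting gives $\opt\le 3\mu(G^\star)+2\bigl(2\sigma+\mu(G^\star)\bigr)=5\mu(G^\star)+4\sigma$. Since $G^\star\subseteq G$ we have $\alg_A=\mu(G)\ge\mu(G^\star)$, and the target guarantee for $\alg_B$ gives $4\sigma\le 4\cdot\tfrac{4\alpha}{3}\alg_B=\tfrac{16\alpha}{3}\alg_B$; hence $\opt\le 5\,\alg_A+\tfrac{16\alpha}{3}\alg_B\le\bigl(5+\tfrac{16\alpha}{3}\bigr)\max\{\alg_A,\alg_B\}$, which is exactly the claimed ratio.

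The step I expect to be the main obstacle is the guarantee for $\alg_B$: realizing a star forest within a factor $4\alpha/3$ of $\sigma$. The trouble is that the natural \gap reduction (every vertex supplies bins and is simultaneously an item) returns a \emph{functional graph} rather than a genuine star forest, and splitting that into realizable pieces costs an extra factor of two, exactly as in the proof of Theorem~\ref{thm:approx-gen-graph-weighted}; that loss alone would already push the ratio past the bound. The crux is therefore to use the unweighted setting to extract a near-maximum \emph{genuine} star forest, so that the center/leaf roles are fixed and the point-contact removal of Theorem~\ref{thm:approx-bipartite-weighted} costs only the factor $4/3$, rather than paying the functional-graph penalty. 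Verifying this, together with the routine bookkeeping in the structural inequality (the rounding in the tree split, components of size two, and isolated vertices of $G^\star$), is where the remaining work lies.
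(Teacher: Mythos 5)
Your structural inequality $\opt \le 5\mu(G^\star) + 4\sigma$ is correct (the three ingredients---Lemma~\ref{lem:matching-dense-planar}, the split of spanning trees into two star forests, and one matching edge per component---all check out), and the final bookkeeping combining $\alg_A$ and $\alg_B$ is fine. But the proposal has a genuine gap, and it is exactly the step you flagged as the ``crux'': you need an algorithm with $\alg_B \ge \tfrac{3}{4\alpha}\sigma$, i.e.\ a genuine star-forest approximation on a \emph{general} graph losing only the factor $\tfrac{4\alpha}{3}$, and no such tool exists in the paper or in your sketch. The only available machinery for general graphs is the \gap reduction of Theorem~\ref{thm:approx-gen-graph-weighted}, where every vertex is simultaneously a bin and an item; its output is a functional graph (components are 1-trees), and extracting a star forest from it provably costs a factor of $2$ in the worst case (a directed cycle of placements forces discarding half the edges, regardless of weights---the unweighted setting does not help here). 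With that penalty you only get $\alg_B \ge \tfrac{3}{8\alpha}\sigma$, which turns your bound into $\opt \le 5\alg_A + \tfrac{32\alpha}{3}\alg_B$, a ratio of roughly $21.9$, well above the claimed $5+16\alpha/3 \approx 13.4$. So as written the argument does not establish the theorem.

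The paper avoids this obstacle by a different decomposition: it computes a \emph{maximal} (not maximum) matching $M$, so that the non-matched edges form a bipartite graph $\bar{G}$ between matched and unmatched vertices, which is handled by Theorem~\ref{thm:approx-bipartite-weighted} with ratio $16\alpha/3$ (in the bipartite setting bins and items are disjoint, so no functional-graph extraction is ever needed). On the induced subgraph $G'$ of matched vertices it never touches star forests at all: it realizes $M \cup M''$, where $M''$ is a maximum matching of $G'-M$, as vertex-disjoint paths and cycles, and shows via Lemma~\ref{lem:matching-dense-planar} (applied to $E^*\setminus M$, much as you apply it to $G^\star$) that this is a $5$-approximation on $G'$; Lemma~\ref{lem:partitioned-approximation} then gives $5+16\alpha/3$. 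If you want to salvage your two-candidate scheme, you would have to replace $\alg_B$ by something whose benchmark is not ``best star forest in $G^\star$'' but a quantity reachable through the bipartite algorithm---which is essentially what the maximal-matching trick accomplishes.
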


\begin{figure}[t]
  \begin{subfigure}[t]{.4\textwidth}
    \centering
    \includegraphics{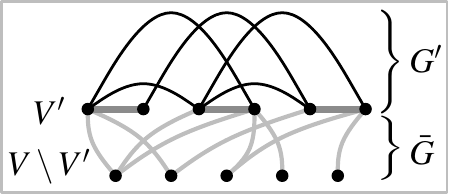}
    \caption{$G$ is covered by $\bar{G}$ (bipartite, gray) and $G'$;
      perfect matching $M$ (gray, bold).}
    \label{fig:matching1}
  \end{subfigure}
  \hfill
  \begin{subfigure}[t]{.21\textwidth}
    \centering
    \includegraphics{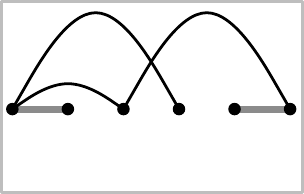}
    \caption{maximum matching $M''$ (gray/black) in $G''=G'{-}M$.}
    \label{fig:matching2}
  \end{subfigure}
  \hfill
  \begin{subfigure}[t]{.28\textwidth}
    \centering
    \includegraphics{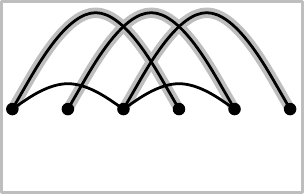}
    \caption{optimum solution to~$G'$: graph $G^*$
      (black) and part of~$M$ (gray).}
    \label{fig:matching3}
  \end{subfigure}
  \caption{Partitioning the input graph and the optimum solution in
    the proof of Theorem~\ref{thm:approx-gen-graph-unweighted}}
  \label{fig:matching}
\end{figure}

\begin{proof}
	The algorithm first computes a maximal matching $M$ in $G$. 
	Let $V'$ be the set of vertices matched by $M$, let $G'$ be the 
	subgraph induced by~$V'$, and let~$E'$ be the edge set of~$G'$.  
	Note that $ \bar{G}=G-E'$ is a bipartite graph with 
	partition $(V',V \setminus V')$. This is because the matching $M$ 
	is maximal, which implies that every edge in $E\setminus E'$ is 
	incident to a vertex in~$V'$ and to a vertex not in~$V'$; see
	Fig.~\ref{fig:matching1}. Hence, we can compute a $16\alpha/3$- 
	approximation to $\bar{G}$ using the algorithm presented in
	Theorem~\ref{thm:approx-bipartite-weighted}.
 
  Consider the graph $G''=(V',E'\setminus M)$ and compute a maximum
  matching~$M''$ in~$G''$; see Fig.~\ref{fig:matching2}.
  The edge set $M\cup M''$ is a set of
  vertex-disjoint paths and cycles and can therefore be completely
  realized~\cite{bfklnopsuw-swcrh-LATIN14}.  The
  algorithm realizes this set.  Below, we argue that
  this realization is in fact a $5$-approximation for $G'$, which completes
  the proof (due to Lemma~\ref{lem:partitioned-approximation} and
  since $G$ is covered by~$G'$ and~$\bar{G}$).

  Let $n'=|V'|$ be the number of vertices of~$G'$.  Let $E^*$ be the set
  of edges realized by an optimum solution to~$G'$, and let $\opt=|E^*|$.
  Consider the subgraph $G^*=(V',E^*\setminus M)$ of~$G''$;
  see Fig.~\ref{fig:matching3}.
  Note that $G^*$ is planar and contains at least $\opt-n'/2$
  many edges.  Applying Lemma~\ref{lem:matching-dense-planar}
  to~$G^*$, we conclude that the maximum matching~$M''$ of~$G''$ has
  size at least $(\opt-5n'/2)/3$.  Hence, by splitting~$\opt$
  appropriately, we obtain
  \begin{displaymath}
   \opt \;=\; (\opt-5n'/2)+5n'/2 \;\le\; 3|M''|+5|M| \;\le\; 5|M''
   \cup M|\,. \qedhere
  \end{displaymath}
\end{proof}

\section{The Model with Point Contacts}
\label{sec:point-model}

In the model with point contacts, adjacencies between boxes may be
realized by a \emph{point contact}, that is, if two boxes
touch each other in two corners. Note that the algorithms that use
the PTAS of Lemma~\ref{lem:gap-ptas}
also hold for this model without any modification.

\subsection{Weighted bipartite and general graphs.}
For these graph classes, we do, on the one hand, no longer need the
post-processing that we applied in
Theorems~\ref{thm:approx-bipartite-weighted}
and~\ref{thm:approx-gen-graph-weighted} (and implicitly also in
Theorem~\ref{thm:randomized-approx-gen-graph-weighted}).  This
post-processing cost us up to a quarter of the total profit.  Hence,
we can (for now) replace $\alpha$ by $3\alpha/4$, which improves the
approximation factors for these cases.

On the other hand, a realized graph is now not necessarily planar as
four boxes can meet in a point and both diagonals correspond to
edges of the input graph.  It is, however, easy to see that the graphs
that can be realized are 1-planar.  This means that an optimal
solution has at most $4n-8$ edges in the case of general graphs and at
most $3n-6$ edges in the case of bipartite graphs.  Furthermore,
Ackerman~\cite{ackerman2014notejournal}
showed very recently that a 1-planar graph can be covered by a planar
graph and a tree.  Hence, we can cover a 1-planar graph with seven
star forests and a bipartite 1-planar graph with six star forests (via
a bipartite planar graph and a tree).

If our approximation algorithm for bipartite graphs uses this
decomposition into six star forests, we easily get a
$6\alpha$-approximation for this case.  As a consequence, we get (as
in Theorem~\ref{thm:randomized-approx-gen-graph-weighted}) a
randomized $12\alpha$-approximation for general graphs.  Similarly,
decomposing an optimum 1-planar solution into seven star forests
(instead of five star forests for planar graphs), we get a
deterministic $14\alpha$-approximation for general graphs.

\begin{theorem}
  \label{thm:point-contacts-weighted}
  Weighted \crown in the model with point contacts admits
  a~$6\alpha(\approx9.5)$-approximation algorithm on bipartite graphs,
  a randomized~$12\alpha(\approx19)$-approximation algorithm on general
	graphs, and a deterministic~$14\alpha(\approx22.1)$-approximation
	algorithm on general graphs.
\end{theorem}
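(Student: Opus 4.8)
The plan is to re-run the three weighted-case algorithms (Theorems~\ref{thm:approx-bipartite-weighted}, \ref{thm:randomized-approx-gen-graph-weighted}, and~\ref{thm:approx-gen-graph-weighted}) with the two simplifications that the preceding discussion has already set up. First, in the point-contact model a \gap solution corresponds \emph{directly} to a feasible representation, so the post-processing step of those proofs is no longer needed: each star forest is approximated within the pure \gap ratio~$\alpha$, with no quarter-profit loss. Second, a realized graph is now 1-planar rather than planar, so by Ackerman's covering result~\cite{ackerman2014notejournal} an optimum solution decomposes into six star forests in the bipartite case and seven star forests in the general case, in place of four and five.

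For the bipartite bound I would build the same \gap instance as in Theorem~\ref{thm:approx-bipartite-weighted} (eight bins per vertex of one side, one item per vertex of the other) together with the symmetric instance obtained by swapping the two sides, and return the better \gap approximation. Since point contacts are allowed, this yields a feasible star-forest representation of profit at least $\opt'/\alpha$ without any post-processing. Fixing an optimum solution, its realized graph is bipartite 1-planar and hence covered by six star forests (four from the underlying bipartite planar graph, two from the tree in Ackerman's decomposition); the heaviest of these carries profit at least $\opt/6$ and is captured by one of the two \gap instances, so $\alg \ge \opt/(6\alpha)$.

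The randomized general bound follows the template of Theorem~\ref{thm:randomized-approx-gen-graph-weighted} verbatim: randomly two-color the vertices, keep the bichromatic edges, and apply the $6\alpha$-bipartite algorithm. Each optimum edge survives with probability $1/2$, so $E[\alg] \ge E[\opt']/(6\alpha) \ge (\opt/2)/(6\alpha) = \opt/(12\alpha)$. For the deterministic general bound I would reuse the construction of Theorem~\ref{thm:approx-gen-graph-weighted} (both an item and eight bins for \emph{every} vertex); since the optimum is 1-planar it decomposes into seven star forests, giving $\opt_{\GAP} \ge \opt/7$ and hence $\alg_{\GAP} \ge \opt/(7\alpha)$. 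Splitting each $1$-tree component of the \gap solution into a star forest and a disjoint union of a cycle and a star forest, both of which are realizable with point contacts, and keeping the heavier half costs at most a factor~$2$; with no post-processing required this gives $\alg \ge \alg_{\GAP}/2 \ge \opt/(14\alpha)$.

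The main obstacle is not in these arithmetic recombinations---the factors $6$, $12$, and~$14$ drop out once the two structural facts are in place---but in justifying those facts cleanly: that every point-contact representation realizes a $1$-planar graph, that Ackerman's planar-plus-tree covering yields exactly six (bipartite) and seven (general) star forests, and, above all, that dropping the post-processing is legitimate, i.e.\ that every \gap solution and every $1$-tree sub-decomposition is realizable with point contacts and without overlaps. I would therefore concentrate most of the care on confirming this \gap-to-representation correspondence in the point-contact model, which the opening of Section~\ref{sec:point-model} asserts and which underwrites all three bounds.
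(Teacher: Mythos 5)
Your proposal is correct and follows essentially the same route as the paper's own argument: drop the post-processing (so each star forest is approximated within the pure \gap{} ratio~$\alpha$), observe that point-contact representations realize 1-planar graphs, and use Ackerman's planar-plus-tree covering to decompose the optimum into six (bipartite) respectively seven (general) star forests, which yields the factors $6\alpha$, $12\alpha$, and $14\alpha$ exactly as in the paper. The structural facts you flag as needing care are the same ones the paper relies on (the \gap-to-representation correspondence with point contacts is established in the proof of Theorem~\ref{thm:previous-improved}), so no gap remains.
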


\subsection{Unweighted general graphs.}
In order to modify the algorithm for the unweighted case, we use the new
decomposition of bipartite graphs. It is easy to prove that any
1-planar graph with $m$ edges and $n$ vertices contains a matching of
size at least~$(m-3n)/3$: we planarize the graph (by removing at most
$n$ edges) and then apply Lemma~\ref{lem:matching-dense-planar}.  This
results in a $(7+6\alpha)$-approximation for unweighted general graphs.

\begin{theorem}
  \label{thm:point-contacts-unweighted}
  Weighted \crown in the model with point contacts admits
  a~$(7+6\alpha)(\approx16.5)$-approximation algorithm on unweighted
	general graphs.
\end{theorem}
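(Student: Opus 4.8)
The plan is to mirror the proof of Theorem~\ref{thm:approx-gen-graph-unweighted}, adapting the two places where the point-contact model changes the bounds. As before, I would first compute a maximal matching $M$ in the input graph $G=(V,E)$, let $V'$ be the set of matched vertices, $G'$ the induced subgraph on $V'$ with edge set $E'$, and observe that $\bar{G}=G-E'$ is bipartite with parts $(V',V\setminus V')$ because $M$ is maximal. The only change at this stage is that, instead of invoking the $16\alpha/3$-approximation for bipartite graphs, I would invoke the $6\alpha$-approximation for bipartite graphs in the point-contact model (the decomposition into six star forests discussed just before Theorem~\ref{thm:point-contacts-weighted}). Since $G$ is covered by $G'$ and $\bar{G}$, Lemma~\ref{lem:partitioned-approximation} will combine a $6\alpha$-approximation on $\bar{G}$ with a $7$-approximation on $G'$ into the claimed $(7+6\alpha)$-approximation, so it remains to establish the latter.

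For $G'$, I would proceed exactly as before: form $G''=(V',E'\setminus M)$, compute a maximum matching $M''$ in $G''$, and realize the edge set $M\cup M''$, which is a disjoint union of paths and cycles and hence completely realizable (even without point contacts). The analysis is where the 1-planarity enters. Let $E^*$ be the edges realized by an optimum point-contact solution to $G'$ and $\opt=|E^*|$; the subgraph $G^*=(V',E^*\setminus M)$ is now 1-planar rather than planar, and it still has at least $\opt-n'/2$ edges, where $n'=|V'|$ and $|M|=n'/2$. Applying the matching bound for 1-planar graphs stated in the preamble to this subsection (any 1-planar graph with $m$ edges and $n$ vertices has a matching of size at least $(m-3n)/3$) to $G^*$, and using $G^*\subseteq G''$, yields $|M''|\ge((\opt-n'/2)-3n')/3=(\opt-7n'/2)/3$. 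Splitting $\opt$ accordingly gives
\[
  \opt \;=\; (\opt-7n'/2)+7n'/2 \;\le\; 3|M''|+7|M| \;\le\; 7\,|M\cup M''|,
\]
so $M\cup M''$ realizes at least $\opt/7$ edges, a $7$-approximation for $G'$.

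The two differences from Theorem~\ref{thm:approx-gen-graph-unweighted} are precisely the replacement of the planar matching bound $(m-2n)/3$ by the 1-planar bound $(m-3n)/3$ (which raises the coefficient of $n'$ from $5/2$ to $7/2$, and hence the factor for $G'$ from $5$ to $7$) and the replacement of the bipartite factor $16\alpha/3$ by $6\alpha$. I do not expect a genuine obstacle, since both ingredients are supplied earlier in this section; the only points requiring care are verifying that the optimum solution to $G'$ is indeed 1-planar in the point-contact model and that $G^*\subseteq G''$, so that the maximum matching $M''$ of $G''$ dominates the matching guaranteed in $G^*$.
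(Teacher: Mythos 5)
Your proposal is correct and is essentially the paper's own proof: the paper likewise reuses the algorithm of Theorem~\ref{thm:approx-gen-graph-unweighted}, replacing the bipartite subroutine by the $6\alpha$-approximation of Theorem~\ref{thm:point-contacts-weighted} and the planar matching bound of Lemma~\ref{lem:matching-dense-planar} by the 1-planar bound $(m-3n)/3$, which raises the factor on $G'$ from $5$ to $7$ and yields $(7+6\alpha)$ overall. Your write-up merely makes explicit the bookkeeping ($|M''|\ge(\opt-7n'/2)/3$, hence $\opt\le 3|M''|+7|M|\le 7|M\cup M''|$) that the paper's terse sketch leaves implicit.
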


\section{APX-Completeness}
\label{sub:hardness}

In this section, we prove APX-completeness of weighted \crown
by giving a reduction from 3-dimensional matching. This reduction
works both in the model without and in the model with point contacts.

\begin{theorem}
  \label{thm:bipartite-weighted-apx-complete}
  Weighted \crown is APX-complete even if the input graph is bipartite of
  maximum degree~9, each edge has profit~1,~2 or~3, and each vertex
  corresponds to a square of one out of three different sizes.
\end{theorem}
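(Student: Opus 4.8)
The plan is to prove APX-completeness in two parts: membership in APX, which follows immediately from the constant-factor approximation algorithms already established (e.g., Theorem~\ref{thm:approx-bipartite-weighted} gives an $O(1)$-approximation for bipartite graphs), and APX-hardness, which is the substantive direction. For hardness I would give an \emph{$L$-reduction} (or equivalently a PTAS-preserving reduction) from a known APX-hard problem, and the statement's phrasing---bipartite, bounded degree, profits in $\{1,2,3\}$, three box sizes---strongly suggests reducing from \textsc{Maximum 3-Dimensional Matching} of bounded degree (\textsc{Max-3DM-$B$}), which is APX-hard.

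\medskip\noindent\textbf{The construction.}
Recall an instance of \textsc{3DM} is a set $T \subseteq X \times Y \times Z$ of triples over three disjoint ground sets, and we seek a maximum subset of pairwise-disjoint triples. For each triple $t=(x,y,z)\in T$ I would create a \emph{triple gadget}: a small fixed constellation of boxes whose internal desired contacts can be fully realized in essentially one way, but whose ``interface'' boxes can additionally touch element-boxes representing $x$, $y$, and $z$. The key design goal is that the three element-contacts of a triple gadget can be realized \emph{simultaneously} only when the gadget is placed in a canonical ``selected'' position, and that any single element-box (say the box for $x\in X$) can be in contact with at most one selected triple gadget at a time---this last property encodes the disjointness constraint of \textsc{3DM}. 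The profit weights $1,2,3$ and the three distinct square sizes are exactly the degrees of freedom I would use to tune the gadget so that (i) intra-gadget edges are so profitable that any near-optimal solution realizes all of them, and (ii) the marginal gain from ``selecting'' a triple (realizing its three element-contacts) is a fixed positive constant. Bipartiteness is arranged by two-coloring the boxes so that all desired contacts run between the two color classes, and bounded maximum degree~$9$ is achieved because each box participates in only a constant number of intended adjacencies (the explicit bound $9$ is a gadget bookkeeping detail).

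\medskip\noindent\textbf{Completeness and soundness of the reduction.}
I would then show the two inequalities that make this an $L$-reduction. For \emph{completeness}, a matching $M$ in the \textsc{3DM} instance yields a \crown solution realizing all intra-gadget contacts plus three element-contacts per selected triple, so $\opt_{\crown} \ge c_0\,|T| + c_1\,\opt_{\text{3DM}}$ for explicit constants; conversely, because intra-gadget edges dominate, an optimal \crown solution may be \emph{normalized} (without loss of profit) so that every triple gadget is in a canonical position, at which point the set of triples whose three element-contacts are realized forms a valid matching, giving the reverse bound. The normalization argument---showing that any solution can be massaged into canonical gadget positions without sacrificing profit, so that it genuinely decodes to a feasible 3DM solution---is the step I expect to be the main obstacle, since it requires a careful geometric case analysis of how boxes of the three allowed sizes can touch, and it must rule out ``cheating'' configurations (including point contacts, so that the reduction works in both models as claimed). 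Once normalization and the two linear inequalities relating $\opt_{\crown}$ and $\opt_{\text{3DM}}$ are in hand, the standard $L$-reduction machinery transfers APX-hardness, and combined with membership in APX this yields APX-completeness.
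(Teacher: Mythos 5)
Your high-level plan does coincide with the paper's: membership in APX follows from Theorem~\ref{thm:approx-bipartite-weighted}, and hardness is shown by a reduction from bounded-occurrence 3-dimensional matching using a per-triple gadget of boxes, with the profits $\{1,2,3\}$ and three square sizes serving exactly the roles you anticipate. (The paper phrases hardness as a gap-preserving reduction from the NP-hard gap version of bounded 3DM rather than as an $L$-reduction, but that is a stylistic difference.) The problem is that your proposal stops exactly where the proof begins: no concrete gadget is given, and the normalization argument that you yourself flag as ``the main obstacle'' is the entire technical content of the theorem. In the paper, the gadget for a hyperedge $e$ consists of a square $e^\star$ of side $3.5$ and eight squares $e_1,\dots,e_8$ of side $3$, with $p(e^\star,e_1)=2$, $p(e^\star,e_i)=3$ for $i\ge 2$, and $p(e^\star,v)=1$ for each element $v$ of $e$ (elements are unit squares). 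The chosen side lengths force a packing dichotomy around $e^\star$: either all eight side-$3$ squares touch it (and then no unit square fits), or only seven do and the three unit elements fit on the single free side. Hence every gadget earns either $7\cdot 3+2=23$ or $7\cdot 3+3=24$, and the optimum equals $23|E|+|M|$ where $M$ is the set of ``selected'' hyperedges, which the geometry forces to be a matching.

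This also exposes a substantive flaw in the one design decision you do commit to, namely that ``intra-gadget edges are so profitable that any near-optimal solution realizes all of them,'' with element contacts as pure bonus. That is precisely \emph{not} how the working gadget operates: selecting a triple requires \emph{sacrificing} the intra-gadget edge $(e^\star,e_1)$ of profit $2$ in exchange for the three element contacts of total profit $3$, so the marginal gain is $1$. This substitution (filler) mechanism is what makes partial selections harmless: a gadget realizing only one or two element contacts (profit $\le 2$) can be rewired to the $e_1$ configuration without losing profit, which is what yields the exact identity $\text{profit}=23|E|+|M|$ and lets one argue that $M$ is a matching. Under your design, a gadget touching only one or two of its elements still pockets that profit, and two gadgets that each realize two shared elements can jointly collect element profit while contributing nothing to any matching; the total profit is then no longer an affine function of the 3DM optimum, and the soundness inequality of your $L$-reduction breaks (or at best requires bounding such fractional configurations, a considerably harder argument). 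So beyond filling in the geometry, you would need to replace your design goal (i) by a substitution gadget of the paper's kind.
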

\begin{proof}
 We give a reduction from 3-dimensional matching (3DM). An instance
  of this problem is given by three disjoint sets $X,Y,Z$ with
  cardinalities $|X|=|Y|=|Z|=k$ and a set $E\subseteq X\times Y\times Z$
  of hyperedges.  The objective is to find a set $M\subseteq E$,
  called \emph{matching}, such that no element of $V=X\cup Y\cup Z$
  is contained in more than one hyperedge in $M$ and such that $|M|$
  is maximized.

  The problem is known to be APX-hard~\cite{Fleischer2011}.  More
  specifically, for the special case of 3DM where every $v\in V$ is
  contained in at most three hyperedges (hence $|E|\leq 3k$) it is
  NP-hard to decide whether the maximum matching has cardinality~$k$
  or only $k(1-\eps_0)$ for some constant $0<\eps_0<1$.  We reduce
  from this special case of 3DM to \crown.

  To this end, we construct the following \crown instance from a
  given 3DM instance.  We create, for each $v\in V$, a square of side
  length~1.  For each hyperedge $e\in E$, we create nine squares
  $e^\star,e_1,\dots,e_8$ where~$e^\star$ has side length~3.5 and
  $e_1,\dots,e_8$ have side length~3.  In the desired contact graph,
  we create an edge $(e^\star,e_1)$ of profit~2 and, for
  $i=2,\dots,8$, an edge $(e^\star,e_i)$ of profit~3.  We also create
  an edge $(e^\star,v)$ of profit~1 if~$v$ is incident to~$e$ in the
  3DM instance.

  Consider an optimum solution to the above \crown instance.  It is
  not hard to verify that, for any hyperedge $e=(x,y,z)$, the solution
  will realize the edges $(e^\star,e_i)$ for $i=2,\dots,8$.  Moreover,
  we can assume w.l.o.g.\ that the solution either realizes all three
  adjacencies $(e^\star,x)$, $(e^\star,y)$, and $(e^\star,z)$ of total
  profit~3 or the adjacency $(e^\star,e_1)$ of profit~2; see
  Fig.~\ref{fig:illustration-apx-completeness}.  We call such a
  solution \emph{well-formed}.

	\begin{figure}[t]
		\centering
		\begin{minipage}[b]{.6\textwidth}
			\begin{subfigure}[t]{.47\textwidth}
				\centering
				\includegraphics[page=1]{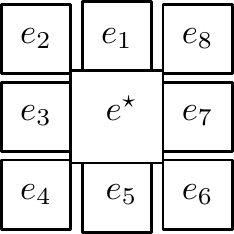}
				\caption{profit $7 \cdot 3 + 2 = 23$}
			\end{subfigure}
			\hfill
			\begin{subfigure}[t]{.47\textwidth}
				\centering
				\includegraphics[page=2]{apx-hardness}
				\caption{profit $7 \cdot 3 + 3 \cdot 1 = 24$}
			\end{subfigure}
		\end{minipage}
		\caption{The two possible configurations of a hyperedge $e=(x,y,z)$
			in the proof of Theorem~\ref{thm:bipartite-weighted-apx-complete}}
		\label{fig:illustration-apx-completeness}
	\end{figure}

  Assume that there is a solution $M$ to the 3DM instance of
  cardinality $k$.  Then this can be transformed into a well-formed
  solution to \crown of profit $(7\cdot 3+2)|E|+|M|=23|E|+k$.

  Conversely, suppose that the maximum matching has cardinality at most
  $(1-\eps_0)k$.  Consider an optimum solution to the respective
  \crown instance.  We may assume that the solution is
  well-formed.  Let $M$ be the set of hyperedges $e=(x,y,z)$ for which
  all three adjacencies $(e^\star,x), (e^\star,y), (e^\star,z)$ are
  realized.  Then,
  the profit of this solution is $(7\cdot 3+2)|E|+|M|=23|E|+|M|$.  Note
  that $M$ is in fact a matching because the solution to \crown
  was well-formed.  Thus, the optimum profit is bounded by
  $23|E|+(1-\eps_0)k$.

  Hence, it is NP-hard to distinguish between instances
  with~$\opt\geq 23|E|+k$ and instances with~$\opt\leq 23|E|+(1-\eps_0)k$.
  Using $|E|\leq 3k$, this implies that there cannot be any
  approximation algorithm of ratio less than
  \begin{displaymath}
    \frac{23|E|+k}{23|E|+(1-\eps_0)k} \;=\;
    1+\frac{\eps_0k}{23|E|+(1-\eps_0)k} \;\ge\;
    1+\frac{\eps_0k}{(70-\eps_0)k} \;=\; 1+\frac{\eps_0}{70-\eps_0}
    \,,
  \end{displaymath}
  which is a constant strictly larger than~$1$.
\end{proof}

\section{Conclusions and Open Problems}
\label{sec:open}

We presented approximation algorithms for the \crown problem, which can
be used for constructing semantics-preserving word clouds. Apart from
improving approximation factors for various graph classes, many
open problems remain. Most of our algorithms
are based on covering the input graph by subgraphs and
packing solutions for the individual subgraphs. Both
subproblems---covering graphs with special types of subgraphs and
packing individual solutions together---are interesting problems in
their own right which may lead to algorithms with better guarantees.
Practical variants of the problem are also of interest, for
example, restricting the heights of the boxes to predefined values
(determined by font sizes), or defining more than immediate neighbors
to be in contact, thus considering non-planar ``contact'' graphs.
Another interesting variant is when the bounding box of the
representation has a certain fixed size or aspect ratio.

\bibliographystyle{abbrv}
\bibliography{abbrv,wordle}

\begin{thebibliography}{10}

\bibitem{ackerman2014notejournal}
E.~Ackerman.
\newblock A note on 1-planar graphs.
\newblock {\em Discrete Appl. Math.}, 175:104--108, 2014.

\bibitem{bfklnopsuw-swcrh-LATIN14}
L.~Barth, S.~I. Fabrikant, S.~Kobourov, A.~Lubiw, M.~N\"ollenburg, Y.~Okamoto,
  S.~Pupyrev, C.~Squarcella, T.~Ueckerdt, and A.~Wolff.
\newblock Semantic word cloud representations: Hardness and approximation
  algorithms.
\newblock In A.~Pardo and A.~Viola, editors, {\em Proc. 11th Latin American
  Symp. Theor. Inform. (LATIN '14)}, volume 8392 of {\em Lect. Notes Comput.
  Sci.}, pages 514--525. Springer, Heidelberg, 2014.

\bibitem{swcTR}
L.~Barth, S.~Kobourov, and S.~Pupyrev.
\newblock Experimental comparison of semantic word clouds.
\newblock In J.~Gudmundsson and J.~Katajainen, editors, {\em Proc. 13th Int.
  Symp. Experimental Algorithms (SEA '14)}, volume 8504 of {\em Lect. Notes
  Comput. Sci.}, pages 247--258. Springer, Heidelberg, 2014.

\bibitem{briest05apx-mechanism-design}
P.~Briest, P.~Krysta, and B.~V{\"o}cking.
\newblock Approximation techniques for utilitarian mechanism design.
\newblock In H.~N. Gabow and R.~Fagin, editors, {\em Proc. 37th Ann. ACM Symp.
  Theory of Computing (STOC'05)}, pages 39--48. ACM, 2005.

\bibitem{bkv-atumd-SICOMP11}
P.~Briest, P.~Krysta, and B.~V{\"o}cking.
\newblock Approximation techniques for utilitarian mechanism design.
\newblock {\em SIAM J. Comput.}, 40(6):1587--1622, 2011.

\bibitem{buchsbaum08}
A.~L. Buchsbaum, E.~R. Gansner, C.~M. Procopiuc, and S.~Venkatasubramanian.
\newblock Rectangular layouts and contact graphs.
\newblock {\em ACM Trans. Algorithms}, 4(1), 2008.

\bibitem{mkpptas}
C.~Chekuri and S.~Khanna.
\newblock A {PTAS} for the multiple knapsack problem.
\newblock In {\em Proc. 11th Ann. ACM-SIAM Symp. Discrete Algorithms (SODA
  '00)}, pages 213--222. SIAM, 2000.

\bibitem{CKR06}
R.~Cohen, L.~Katzir, and D.~Raz.
\newblock An efficient approximation for the generalized assignment problem.
\newblock {\em Inform. Process. Lett.}, 100(4):162--166, 2006.

\bibitem{cui2010context}
W.~Cui, Y.~Wu, S.~Liu, F.~Wei, M.~Zhou, and H.~Qu.
\newblock Context-preserving dynamic word cloud visualization.
\newblock {\em IEEE Comput. Graph. Appl.}, 30(6):42--53, 2010.

\bibitem{dwyer05}
T.~Dwyer, K.~Marriott, and P.~J. Stuckey.
\newblock Fast node overlap removal.
\newblock In P.~Healy and N.~S. Nikolov, editors, {\em Proc. 13th Int. Symp.
  Graph Drawing (GD '05)}, volume 3843 of {\em Lect. Notes Comput. Sci.}, pages
  153--164. Springer, Heidelberg, 2005.

\bibitem{eppstein2012area}
D.~Eppstein, E.~Mumford, B.~Speckmann, and K.~Verbeek.
\newblock Area-universal and constrained rectangular layouts.
\newblock {\em SIAM J. Comput.}, 41(3):537--564, 2012.

\bibitem{Erkan04}
G.~Erkan and D.~R. Radev.
\newblock Lexrank: graph-based lexical centrality as salience in text
  summarization.
\newblock {\em J. Artif. Int. Res.}, 22(1):457--479, 2004.

\bibitem{felsner2013rectangle}
S.~Felsner.
\newblock Rectangle and square representations of planar graphs.
\newblock In J.~Pach, editor, {\em Thirty Essays on Geometric Graph Theory},
  pages 213--248. Springer, Heidelberg, 2013.

\bibitem{Fleischer2011}
L.~Fleischer, M.~X. Goemans, V.~Mirrokni, and M.~Sviridenko.
\newblock Tight approximation algorithms for maximum separable assignment
  problems.
\newblock {\em Math. Oper. Res.}, 36(3):416--431, 2011.

\bibitem{frederickson87shortestpaths-planar}
G.~N. Frederickson.
\newblock Fast algorithms for shortest paths in planar graphs, with
  applications.
\newblock {\em SIAM J. Comput.}, 16(6):1004--1022, 1987.

\bibitem{gh10}
E.~R. Gansner and Y.~Hu.
\newblock Efficient, proximity-preserving node overlap removal.
\newblock {\em J. Graph Algorithms Appl.}, 14(1):53--74, 2010.

\bibitem{hakimi96}
S.~L. Hakimi, J.~Mitchem, and E.~F. Schmeichel.
\newblock Star arboricity of graphs.
\newblock {\em Discrete Math.}, 149(1--3):93--98, 1996.

\bibitem{l-wcdbc-JNLE02}
H.~Li.
\newblock Word clustering and disambiguation based on co-occurrence data.
\newblock {\em J. Nat. Lang. Eng.}, 8(1):25--42, 2002.

\bibitem{n-dfgf-JLMS64}
C.~Nash-Williams.
\newblock Decomposition of finite graphs into forests.
\newblock {\em J. London Math. Soc.}, 39:12, 1964.

\bibitem{nishizeki79matchings-planar}
T.~Nishizeki and I.~Baybars.
\newblock Lower bounds on the cardinality of the maximum matchings of planar
  graphs.
\newblock {\em Discrete Math.}, 28(3):255--267, 1979.

\bibitem{nollenburg2013edge}
M.~N{\"o}llenburg, R.~Prutkin, and I.~Rutter.
\newblock Edge-weighted contact representations of planar graphs.
\newblock {\em J. Graph Algorithms Appl.}, 17(4):441--473, 2013.

\bibitem{semantic_collections}
F.~V. Paulovich, F.~M.~B. Toledo, G.~P. Telles, R.~Minghim, and L.~G. Nonato.
\newblock Semantic wordification of document collections.
\newblock {\em Comput. Graph. Forum}, 31(3):1145--1153, 2012.

\bibitem{r-rsc-34}
E.~Raisz.
\newblock The rectangular statistical cartogram.
\newblock {\em Geogr. Review}, 24(3):292--296, 1934.

\bibitem{viegas2009}
F.~B. Vi{\'e}gas, M.~Wattenberg, and J.~Feinberg.
\newblock Participatory visualization with {W}ordle.
\newblock {\em IEEE Trans. Visual. Comput. Graphics}, 15(6):1137--1144, 2009.

\bibitem{spiegel2013}
S.~Weiland.
\newblock {Der Koalitionsvertrag im Schnellcheck (Quick overview of the
  [German] coalition agreement)}.
\newblock Spiegel Online,
  \href{http://www.spiegel.de/politik/deutschland/was-der-koalitionsvertrag-deutschland-bringt-a-935856.html}{www.spiegel.de/politik/deutschland/was-der-koalitionsvertrag-deutschland-bringt-a-935856.html}.
  Click on ``Fotos'', 27~Nov. 2013.

\bibitem{wu2011semantic}
Y.~Wu, T.~Provan, F.~Wei, S.~Liu, and K.-L. Ma.
\newblock Semantic-preserving word clouds by seam carving.
\newblock {\em Comput. Graph. Forum}, 30(3):741--750, 2011.

\end{thebibliography}

\end{document}